\newtheorem{theorem}{Theorem}
\newtheorem{lemma}[theorem]{Lemma}
\newtheorem{definition}[theorem]{Definition}
\newtheorem{proposition}[theorem]{Proposition}
\newtheorem{corollary}[theorem]{Corollary}
\DeclareMathOperator*{\ho}{\overset{\perp}{\mathcal{\oplus}}}
\begin{document}
\sloppy
\title{Effective Operators in the Mathematical Theory of Composite Materials: the Hilbert Space Framework}
\title{Effective Operators in the Theory of Composites: Hilbert Space Framework}
\author{Aaron Welters\\Department of Mathematics and Systems Engineering\\ Florida Institute of Technology\\ Melbourne, FL USA}

%
%
\maketitle
%
\abstract{In this chapter, the Hilbert space framework in the mathematical theory of composite materials is introduced for studying the properties of effective operators. The goal is to introduce some of the key concepts and fundamental theorems in this area while showing that they follow naturally from using only basic results in operator theory on Hilbert spaces. These concepts include the $Z$-problem as an abstraction of a constitutive equation defined in terms of a bounded linear operator on a Hilbert space with a Hodge decomposition, direct and dual $Z$-problems with the duality interpretation of the inverse of an effective operator, and the notion of an $n$-phase composite with orthogonal $Z(n)$-subspace collection. These theorems include sufficient conditions for the existence and uniqueness of both the solution of a $Z$-problem and the effective operator of a $Z$-problem, a representation formula for the effective operator as an operator Schur complement, the Dirichlet and Thomson minimization principles for the effective operator, the result on monotonicity and concavity of the effective operator map, and the Keller-Dykhne-Mendelson duality relations. Moreover, another important theorem given here (which may also be of independent interest to systems theorists) says that an effective operator of an $n$-phase composite with orthogonal $Z(n)$-subspace collection is the Schur complement of a normalized homogeneous semidefinite operator pencil (in particular, has a Bessmertny\u{\i} realization) and, up to a unitary equivalence, the converse is also true. Finally, the general theory presented here is shown to recover classical results dealing with effective conductivity but can also be applied to many other important problems involving composites in physics and engineering, e.g., in elasticity and electromagnetism.}

\section{Introduction}

The mathematical theory of composite materials is an active area of research with a long history and several books on the subject, e.g., \cite{Cherkaev:2000:VMS}, \cite{Milton:2016:ETC}, \cite{Milton:2022:TOC}, \cite{ Grabovsky:2025:CMM}. This chapter is an introduction to the subject which focuses on the operator theory (on Hilbert spaces) aspects of some of the key concepts and results from the viewpoint of effective media. This theory provides a way to describe the aggregate behavior or effective properties of a complex, multicomponent system involving a constitutive equation or relation, in terms of a simpler system with an effective constitutive equation that defines the effective parameter, moduli, coefficient, tensor, matrix, operator, etc.

A quintessential example of this, which is discussed in more detail later in the chapter, is the continuum electrical conductivity problem with a conductivity tensor field $\sigma=\sigma(x)$ which is a periodic function of the spatial variable $x$ in a periodic medium with unit cell $\Omega\subseteq \mathbb{R}^d$ ($d=2$ or $d=3$). The constitutive equation governing the relation between the periodic electric field $E=E(x)$ and the periodic current density $J=J(x)$ is Ohm's law, i.e., $J=\sigma E$, and the effective conductivity tensor $\sigma_*$ is defined by the equation
\begin{gather}
    \langle J \rangle = \sigma_*\langle E \rangle,
\end{gather}
where $\langle \cdot \rangle$ denotes the spatial average over the unit cell $\Omega$. 
As an alternative perspective, suppose one wants to solve from Ohm's law both $E$ and $J$ given the average electric field $E_0=\langle E \rangle$, i.e, solve the periodic cell problem:
\begin{gather}
    J=\sigma E,\; \langle E \rangle=E_0.
\end{gather}
The starting point of the Hilbert space framework for this problem is to realize that it is defined on the Hilbert space $\mathcal{H}=\left[  L_{\#}^{2}\left(\Omega\right)\right]^{d}$ of periodic square-integrable vector-valued functions has an orthogonal triple decomposition of subspaces $\mathcal{H}=\mathcal{U}\ho\mathcal{E}\ho\mathcal{J}$ (a Hodge decomposition), see \eqref{HodgeDecompPeriodContinuumCond}-\eqref{HodgeDecompPeriodContinuumCond3}, that is naturally compatible to solving the cell problem. This then provides an example of a $Z$-problem, which is shown below, under some additional assumptions on the conductivity tensor $\sigma$, to be uniquely solvable and yields an equivalent way to define the effective operator $\sigma_*$. Then this leads to further studies of the properties of the effective conductivity (tensor) map
\begin{gather}
    \sigma=\sigma(x)\mapsto \sigma_*=[\sigma(\cdot)]_*.
\end{gather}

A special case of interest is when the conductivity $\sigma=\sigma(x)$ is piecewise constant in the unit cell $\Omega$ and made up of a finite number $n$ of scalar conductivities--an example of an isotropic $n$-phase composite. Equivalently, the periodic medium can decomposed into $n$ disjoint regions with characteristic functions $\chi_i$ with scalar conductivity $z_i$ in that region, for $i=1,\ldots,n$, and hence
\begin{gather}
    \sigma(x,z)=z_1\chi_1(x)+\ldots, z_n\chi_n(x),\;x\in \mathbb{R}^d,\;z=(z_1,\ldots, z_n)\in \mathbb{C}^n.
\end{gather}
This leads to study the effective conductivity (tensor) map as a function of $z$:
\begin{gather}
    z\mapsto \sigma_*(z)=[\sigma(\cdot,z)]_*,\label{defEffCondTensorMap}
\end{gather}
for $\chi_i,$ $i=1,\ldots, n$ fixed. 

There is an alternative operator perspective considered in the chapter. Treat the scalar functions $\sigma(z)=\sigma(\cdot, z)$, $\chi_i=\chi_i(\cdot)$ of $x$ as left-multiplication operators $L(z)=L_{\sigma(z)}, \Lambda_i=L_{\chi_i}$ on the Hilbert space $\mathcal{H}=\left[  L_{\#}^{2}\left(\Omega\right)\right]^{d}$ so that $L(z)F=\sigma(z)F$, $\Lambda_iF=\chi_iF, \forall F\in \left[  L_{\#}^{2}\left(\Omega\right)\right]^{d}$. Then in the space $\mathcal{L}(\mathcal{H})$ of all bounded linear operators on $\mathcal{H}$, the operators $\Lambda_i, i=1,\ldots,n$ form a resolution of the identity \begin{gather}
    \Lambda_i\in \mathcal{L}(\mathcal{H}),\; \Lambda_i^2=\Lambda_i=\Lambda_i^*,\; \Lambda_i\Lambda_j=0,\;\; i,j=1,\ldots, n,\; i\not=j,\\
    \Lambda_1+\cdots + \Lambda_n=I,
\end{gather}
and the multiplication operator $L(z)$ for conductivity $\sigma(z)$ becomes the linear operator pencil
\begin{gather}
    L(z)=z_1\Lambda_1+\cdots +z_n \Lambda_n\in\mathcal{L}(\mathcal{H}),\; z=(z_1,\ldots, z_n)\in \mathbb{C}^n.
\end{gather}
This leads to an example of a $Z(n)$-subspace collection (see Def.\ \ref{def:SubspCollMultiphasComposite}) for the $Z$-problem $(\mathcal{H}, \mathcal{U}, \mathcal{E}, \mathcal{J}, L(z))$ with effective (conductivity) operator $L_*(z)=[L(z)]_*$ (see Def.\ \ref{DefZProbMain}) associated to the orthogonal decomposition
\begin{gather}
    \mathcal{H}=\mathcal{U}\overset{\bot }{\oplus }\mathcal{E}\overset{\bot }{\oplus }\mathcal{J}=\mathcal{P}_1\overset{\bot }{\oplus }\cdots\overset{\bot }{\oplus }\mathcal{P}_n,\\
    \mathcal{P}_i=\Lambda_i\mathcal{H}=\{\chi_iF:F\in \mathcal{H}\},\; i=1,\ldots, n,
\end{gather}
in which $\Lambda_i$ is the orthogonal projection of $\mathcal{H}$ onto the closed subspace $\mathcal{P}_i$, for all $i=1,\ldots, n$.

We next provide a flavor of some results presented in this chapter. First, the map \eqref{defEffCondTensorMap} is well-defined for all $z\in (0,\infty)^n$ [in fact, on a much larger region $D$, see \eqref{defDomainDRotatedPolyplanes}] mapping into $d\times d$ matrices. Second, it is positive definite and monotonic, i.e.,
\begin{gather}
    0<\sigma_*(z)\leq \sigma_*(w),\; \text{whenever }z_i\leq w_i,\; i=1,\ldots, n,
\end{gather}
and the classical Wiener bounds hold:
\begin{gather}
    \left(\sum_{i=1}^nz_i^{-1}f_i\right)^{-1}I\leq \sigma_*(z)\leq\left(\sum_{i=1}^nz_if_i\right) I,
\end{gather}
where $I$ denotes the $d\times d$ identity matrix of $\mathbb{R}^{d\times d}$ and $f_i=\langle\chi_i\rangle$ is the volume fraction of the $i$th phase ($i=1,\ldots, n$). Here, and throughout the chapter, $\leq$ denotes the Loewner order on positive semidefinite matrices (resp.\ operators). Third, in two-dimensions ($d=2$), the effective conductivity tensor $\sigma_*(z)$ satisfies the Keller-Dykhne-Mendelson duality relation: 
    \begin{gather}
        \sigma_*(z)=R[\sigma_*(z^{-1})]^{-1}R^{-1},\;\;R=\begin{bmatrix}
            0 & 1\\
            -1 & 0
        \end{bmatrix},
    \end{gather}
where $z^{-1}=(z_1^{-1},\ldots, z_n^{-1})$. This latter relation is a consequence of the facts that $R^*=R^{-1}=-R$ and $R\mathcal{U}\subseteq\mathcal{U}$ along with the general result in two dimensions that a curl-free field when rotated by $90$ degrees produces a divergence-free field and vice versa (see \cite[p.\ 105]{Cherkaev:2000:VMS}, \cite[Sec.\ 3.1]{Milton:2022:TOC}) which, in particular, leads to the conclusion that $R\mathcal{E}\subseteq\mathcal{J}$ and $R\mathcal{J}\subseteq\mathcal{E}$.

The abstract theory of composites discussed below is an approach that generalizes this example of the Hilbert space framework based on an abstract Hodge decomposition, the $Z$-problem defined by an abstract constitutive equation (the cell problem), the effective operator defined by the $Z$-problem including the effective operator of an abstract $n$-phase composite with an orthogonal $Z(n)$-subspace collection. The purpose is make the theory and results applicable to a much broader range of problems such as in elasticity or electromagnetism as well as to other areas of science. For more on all this, including the connection to homogenization theory and open problems, see, e.g., \cite{Milton:2016:ETC}, \cite{Milton:2021:SOP}, \cite{Milton:2022:TOC}, \cite{Beard:2023:EOT}, \cite{Grabovsky:2025:CMM} and references therein.

In this chapter, the notation $\mathcal{H}$ will be used to denote a Hilbert space over either the real field $\mathbb{R}$ or the complex field $\mathbb{C}$, i.e., a real or complex Hilbert space, respectively. The associated inner product is denoted by $\langle \cdot, \cdot \rangle$ (or $\langle \cdot, \cdot \rangle_{\mathcal{H}}$) and it will be assumed in the complex case that $\langle u,\lambda v \rangle=\lambda\langle u,v \rangle$ for all $\lambda \in \mathbb{C}, u,v\in \mathcal{H}$. The space of all bounded linear operators that maps one Hilbert space $\mathcal{H}$ into another $\mathcal{V}$ will be denoted by $\mathcal{L}(\mathcal{H},\mathcal{V})$ , and $\mathcal{L}(\mathcal{H},\mathcal{H})$ is also denoted by $\mathcal{L}(\mathcal{H})$. For any Hilbert space $\mathcal{H}$ with an orthogonal subspace decomposition $\mathcal{H}=\mathcal{H}_0\overset{\bot
			}{\mathcal{\oplus}}\mathcal{H}_1$, recall that each operator $A\in \mathcal{L}(\mathcal{H})$ can be written uniquely as a $2\times 2$ block operator matrix with respect to this decomposition, i.e.,
            \begin{gather*}
                A=[A_{ij}]_{i,j=1,2}=\begin{bmatrix}
                    A_{11} & A_{12}\\
                    A_{21} & A_{22}
                \end{bmatrix},
            \end{gather*}
whose action on elements of $\mathcal{H}_0\overset{\bot
			}{\mathcal{\oplus}}\mathcal{H}_1$, treated as $2\times 1$ block column vectors, is
            \begin{gather*}
                A\begin{bmatrix}
                    x_0\\
                    x_1
                \end{bmatrix}=\begin{bmatrix}
                    A_{00}x_0+A_{01}x_1\\
                    A_{10}x_0+A_{11}x_1
                \end{bmatrix},\; x_0\in \mathcal{H}_0,\; x_1\in \mathcal{H}_1.
            \end{gather*}
Moreover, in the case that the $(1,1)$-block $A_{11}$ is invertible, the Schur complement of $A$ with respect to $A_{11}$, denoted by $A/A_{11}$, is defined by
\begin{gather*}
    A/A_{11}=A_{00}-A_{01}A_{11}^{-1}A_{10}.
\end{gather*}

\section{Abstract Theory of Composites}

\begin{definition}[$Z$-problem and effective operator]\label{DefZProbMain}
	The $Z$-problem
	\begin{equation}
		(\mathcal{H},\mathcal{U},\mathcal{E},\mathcal{J},L), \label{DefZProb}
	\end{equation}
	is the following problem associated with a Hilbert space $\mathcal{H}$, an orthogonal triple decomposition of $\mathcal{H}$ as 
	\begin{equation}
		\mathcal{H=U}\overset{\bot}{\mathcal{\oplus}}\mathcal{E}\overset{\bot
			}{\mathcal{\oplus}}\mathcal{J},\label{DefZProbHOrthTri}
	\end{equation}
	and a bounded linear operator $L\in\mathcal{L}(\mathcal{H})$:
	given $E_{0}\in\mathcal{U}$, find triples $\left(J_{0},E,J\right)\in\mathcal{U}\times\mathcal{E}\times\mathcal{J}$ satisfying 
	\begin{equation}
		J_{0}+J=L \left(  E_{0}+E\right).
		\label{DefZProbEq} 
	\end{equation}
	Such a triple $\left(  J_{0},E,J\right)$ is called a solution of the $Z$-problem at $E_{0}$.
	If there exists a bounded linear operator $L_*\in\mathcal{L}(\mathcal{U}$) such that 
	\begin{equation}
		J_{0}=L_{\ast}E_{0}, \label{DefZProbEffOp}
	\end{equation}
	whenever $E_0\in\mathcal{U}$ and $\left(  J_{0},E,J\right)$ is a solution of the $Z$-problem at $E_0$, then $L_*$ is called an effective operator of the $Z$-problem.
\end{definition}

Now let $(\mathcal{H},\mathcal{U},\mathcal{E},\mathcal{J},L)$ be a $Z$-problem. Then one represents the operator
\begin{align}
 L=[L_{ij}]_{i,j=0,1,2}=\begin{bmatrix}
     L_{11} & L_{12} & L_{13}\\
     L_{21} & L_{22} & L_{23}\\
     L_{31} & L_{32} & L_{33}
 \end{bmatrix}\in\mathcal{L}(\mathcal{H}) \label{3b3BlockOpReprOfSigma}  
\end{align}
as a $3\times 3$ block operator matrix with respect to the orthogonal triple decomposition (\ref{DefZProbHOrthTri}) of the Hilbert space $\mathcal{H}=\mathcal{U}\ho\mathcal{E}\ho\mathcal{J}$. More precisely, introducing the orthogonal projections $\Gamma_0,\Gamma_1,\Gamma_2$ of $\mathcal{H}$ onto $H_0=\mathcal{U}, H_1=\mathcal{E}, H_2=\mathcal{J},$ respectively, the operators $L_{ij}$ are defined by
\begin{align}
    L_{ij}\in \mathcal{L}(H_j,H_i),\;L_{ij}=\Gamma_iL\Gamma_j:H_j\rightarrow H_i,\label{DefOfSigmaSubblocks}
\end{align}
for $i,j=0,1,2$. In particular, $L_{11}$ is the compression of $L$ to $\mathcal{E}$, that is,
\begin{align}
   L_{11}= \Gamma_1L\Gamma_1|_{\mathcal{E}},\label{DefAltSigma11Subblock}
\end{align}
i.e., the restriction of the operator $\Gamma_1L\Gamma_1$ on $\mathcal{H}$ to the closed subspace $\mathcal{E}$. The $Z$-problem (\ref{DefZProbEq}) is then equivalent to the system
\begin{gather}
    L_{00}E_0+L_{01}E=J_0,\label{ZProbEquivFormPart1}\\
        L_{10}E_0+L_{11}E=0,\label{ZProbEquivFormPart2}\\
        L_{20}E_0+L_{21}E=J.\label{ZProbEquivFormPart3}
\end{gather}
In what follows, the following hypotheses will appear:

\begin{description}
    \item[\hypertarget{(H0)}{(H0)}] $L_{11}$ is invertible.
    \item[\hypertarget{(H1)}{(H1)}] $L^*=L$, $L_{11}\geq 0$, $L_{11}$ is invertible.
    \item[\hypertarget{(H2)}{(H2)}] $L^*=L\geq 0$, $L$ is invertible.
\end{description}
These statements have been ordered from weakest to strongest, in the sense that
\begin{gather}
    \hyperlink{(H2)}{(H2)}\Rightarrow\hyperlink{(H1)}{(H1)}\Rightarrow\hyperlink{(H0)}{(H0)}.
\end{gather}
Another common hypothesis on $L$ is the coercivity assumption:
\begin{description}
     \item[\hypertarget{(LM)}{(LM)}]There exist scalars $\lambda\in \mathbb{C}\setminus\{0\}$ and $\delta>0$ such that 
     \begin{gather}
         \operatorname{Re}(\lambda L)\geq \delta I,
     \end{gather}
\end{description}
where $I$ is the identity operator and $\operatorname{Re}(\cdot)$ denotes the self-adjoint part of an operator $(\cdot)$ in $\mathcal{L}(\mathcal{H})$, i.e., 
\begin{gather}
    \operatorname{Re}A=\frac{1}{2}(A+A^*), \;A\in\mathcal{L}(\mathcal{H}).
\end{gather}
By the Lax-Milgram lemma \cite[p.\ 57, Theorem 6]{Lax:2002:FA}, any operator $L$ satisfying \hyperlink{(LM)}{(LM)} is invertible and hence, using similar reasoning, so are $L_{ii}$, $i=0,1,2$. In particular, this shows that
\begin{gather}
    \hyperlink{(LM)}{(LM)}\Rightarrow \hyperlink{(H0)}{(H0)}.\label{LMImpliesH0}
\end{gather}

In any case, one gets the classical formulas for the solution of the $Z$-problem and the effective operator as a Schur complement (see \cite[Sec.\ 12.7, eq.\ (12.57)]{Milton:2022:TOC}, \cite[Sec.\ 10.2]{Cassier:2016:RAF}, and \cite[Sec.\ II]{Beard:2023:EOT}):
\begin{gather}
    J_0=L_*E_0,\;E=-L_{11}^{-1}L_{10}E_0,\; J=L_{20}E_0+L_{21}E,\label{ClassicSolnZProb}\\
    L_*=\begin{bmatrix}
        L_{00}&L_{01}\\
        L_{10}&L_{11}
    \end{bmatrix}/L_{11}=L_{00}-L_{01}L_{11}^{-1}L_{10},\label{ClassicEffOperFormula}
\end{gather}
where the latter is the Schur complement with respect to $L_{11}$ of the compression of $L$ to $\mathcal{U}\overset{\bot}{\mathcal{\oplus}}\mathcal{E}$:
\begin{gather}
    \begin{bmatrix}
        L_{00}&L_{01}\\
        L_{10}&L_{11}
    \end{bmatrix}=(\Gamma_0+\Gamma_1)L(\Gamma_0+\Gamma_1)|_{\mathcal{U}\overset{\bot}{\mathcal{\oplus}}\mathcal{E}}\in \mathcal{L}(\mathcal{U}\overset{\bot}{\mathcal{\oplus}}\mathcal{E}).
\end{gather}
These results are summarized by the following theorem.
\begin{theorem}\label{ThmMainClassicalZProbEffOp}
If $(\mathcal{H},\mathcal{U},\mathcal{E},\mathcal{J},L)$ is a $Z$-problem and $L_{11}$ is invertible [i.e., \hyperlink{(H0)}{(H0)} holds] then the $Z$-problem has a unique solution for each $E_0\in \mathcal{U}$ and it is given by the formulas (\ref{ClassicSolnZProb}), (\ref{ClassicEffOperFormula}). Moreover, the effective operator of the $Z$-problem exists, is unique, and is given by the Schur complement formula (\ref{ClassicEffOperFormula}).
\end{theorem}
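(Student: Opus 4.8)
The plan is to work entirely with the equivalent block system \eqref{ZProbEquivFormPart1}--\eqref{ZProbEquivFormPart3}, which the paper has already identified with the $Z$-problem \eqref{DefZProbEq} by projecting the constitutive equation onto the three summands $\mathcal{U},\mathcal{E},\mathcal{J}$. The decisive structural observation is that, among the three equations, only the middle one \eqref{ZProbEquivFormPart2}, $L_{10}E_0 + L_{11}E = 0$, constrains the unknown $E \in \mathcal{E}$, and it does so through the operator $L_{11}$ alone; the top and bottom equations then merely \emph{read off} the outputs $J_0$ and $J$ once $E$ is known. Thus the problem decouples, and solving it reduces to inverting $L_{11}$ on $\mathcal{E}$---which is exactly what hypothesis \hyperlink{(H0)}{(H0)} provides.

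First I would establish existence and uniqueness of the solution for fixed $E_0 \in \mathcal{U}$. For uniqueness, if $(J_0,E,J)$ and $(J_0',E',J')$ are both solutions, subtracting the middle equations gives $L_{11}(E-E')=0$; since $L_{11}$ is invertible, hence injective, on $\mathcal{E}$, we get $E=E'$, and then \eqref{ZProbEquivFormPart1} and \eqref{ZProbEquivFormPart3} force $J_0=J_0'$ and $J=J'$. For existence I would exhibit the candidate directly: set $E := -L_{11}^{-1}L_{10}E_0 \in \mathcal{E}$ and define $J_0,J$ by \eqref{ZProbEquivFormPart1} and \eqref{ZProbEquivFormPart3}; a one-line check shows the middle equation holds, $L_{10}E_0 + L_{11}(-L_{11}^{-1}L_{10}E_0)=0$, while the memberships $E\in\mathcal{E}$, $J_0\in\mathcal{U}$, $J\in\mathcal{J}$ follow automatically from the codomains of the blocks $L_{ij}=\Gamma_i L \Gamma_j$. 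Substituting $E$ into \eqref{ZProbEquivFormPart1} reproduces the formulas \eqref{ClassicSolnZProb} and identifies $J_0 = (L_{00}-L_{01}L_{11}^{-1}L_{10})E_0$, i.e.\ the Schur complement \eqref{ClassicEffOperFormula}.

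Next I would treat the effective operator. The operator $L_* := L_{00}-L_{01}L_{11}^{-1}L_{10}$ is bounded and linear: each block $L_{ij}$ is bounded as a compression of the bounded $L$, and $L_{11}^{-1}\in\mathcal{L}(\mathcal{E})$ is bounded by the bounded inverse theorem, so $L_*$ is a finite sum of compositions of bounded linear operators; hence $L_*\in\mathcal{L}(\mathcal{U})$. Because the solution at each $E_0$ is unique, the relation $J_0=L_*E_0$ established above holds for \emph{every} solution at $E_0$, so $L_*$ satisfies the defining property \eqref{DefZProbEffOp} and is an effective operator. Uniqueness is then immediate: if $L_*'$ is any effective operator, then $L_*'E_0 = J_0 = L_*E_0$ for every $E_0\in\mathcal{U}$, whence $L_*'=L_*$.

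There is no deep obstacle here; the real content is recognizing that the triple (Hodge) decomposition turns the constitutive equation into a $3\times 3$ block system that decouples, with the middle row alone determining $E$. The only point requiring genuine care is the passage from ``$L_{11}$ invertible'' to ``$L_{11}^{-1}$ bounded,'' which I would justify by the bounded inverse theorem so as to conclude $L_*\in\mathcal{L}(\mathcal{U})$ rather than merely a linear operator; if the paper's convention already builds boundedness of the inverse into the word ``invertible,'' this step is vacuous. A secondary subtlety worth flagging is that the effective-operator definition quantifies over \emph{all} solutions at $E_0$; this causes no trouble precisely because uniqueness of the solution is proved first, making the map $E_0\mapsto J_0$ single-valued and well-defined.
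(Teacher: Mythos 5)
Your proof is correct and follows exactly the route the paper intends: reduce to the block system \eqref{ZProbEquivFormPart1}--\eqref{ZProbEquivFormPart3}, observe that the middle equation alone determines $E=-L_{11}^{-1}L_{10}E_0$ under \hyperlink{(H0)}{(H0)}, and read off $J_0$, $J$, and the Schur complement formula for $L_*$, with uniqueness of the solution giving uniqueness and well-definedness of the effective operator. The paper leaves these details implicit (it states the theorem as a summary of the formulas \eqref{ClassicSolnZProb}--\eqref{ClassicEffOperFormula}), so your write-up is simply the fully spelled-out version of the same argument.
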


This next definition introduces the notation of duality in $Z$-problems.
\begin{definition}
[Dual $Z$-problem]\label{DefDualZProb}Given a $Z$-problem $(\mathcal{H},\mathcal{U},\mathcal{E},\mathcal{J},L)$ with invertible $L$, called the direct $Z$-problem, the corresponding dual $Z$-problem is the $Z$-problem $(\mathcal{H},\mathcal{U},\mathcal{J},\mathcal{E},L^{-1})$. An effective operator of the dual $Z$-problem will be denoted by $\left(
L^{-1}\right)_{*'}$.
\end{definition}

Now, Theorem \ref{ThmMainClassicalZProbEffOp} yields immediately the next fundamental result on duality. 
\begin{corollary}\label{cor:FundThmOnDualityZProbs}
    If $(\mathcal{H},\mathcal{U},\mathcal{E},\mathcal{J},L)$ is a $Z$-problem such that hypothesis \hyperlink{(LM)}{(LM)} is true for $L$, then $L$ is invertible and \hyperlink{(LM)}{(LM)} is also true for $L^{-1}$. Moreover, the hypothesis \hyperlink{(LM)}{(LM)} is true for the effective operator $L_*$ of the direct $Z$-problem, it is invertible, and the effective operator $\left(
L^{-1}\right)_{*'}$ of the dual $Z$-problem $(\mathcal{H},\mathcal{U},\mathcal{J},\mathcal{E},L^{-1})$ is given by the duality formula
\begin{gather}
    \left(
L^{-1}\right)_{*'}=(L_*)^{-1}.
\end{gather}
\end{corollary}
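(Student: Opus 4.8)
\section*{Proof proposal}

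The plan is to verify the four assertions in turn, in each case reducing to the Lax--Milgram lemma and to Theorem \ref{ThmMainClassicalZProbEffOp}. Invertibility of $L$ is immediate from Lax--Milgram, as already recorded after (LM). To see that (LM) is inherited by $L^{-1}$, I would set $M=\lambda L$, so that $\operatorname{Re}M\geq\delta I$ and $M$ is invertible. For $x\in\mathcal{H}$ write $y=M^{-1}x$; then $\operatorname{Re}\langle x,M^{-1}x\rangle=\operatorname{Re}\langle My,y\rangle=\langle y,(\operatorname{Re}M)y\rangle\geq\delta\|y\|^2\geq\delta\|M\|^{-2}\|x\|^2$, where the last bound uses $\|x\|=\|My\|\leq\|M\|\,\|y\|$. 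Since $M^{-1}=\lambda^{-1}L^{-1}$, this reads $\operatorname{Re}(\lambda^{-1}L^{-1})\geq\delta' I$ with $\delta'=\delta/(|\lambda|^2\|L\|^2)>0$, which is exactly (LM) for $L^{-1}$.

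Next I would show that (LM) passes to the effective operator. The crux is the energy identity $\langle E_0,L_*E_0\rangle=\langle E_0+E,L(E_0+E)\rangle$, valid for the (unique, by Theorem \ref{ThmMainClassicalZProbEffOp}) solution $(J_0,E,J)$ at $E_0$. This follows by writing $L(E_0+E)=J_0+J$ and expanding $\langle E_0+E,J_0+J\rangle$, in which all cross terms vanish by the mutual orthogonality of $\mathcal{U},\mathcal{E},\mathcal{J}$, leaving $\langle E_0,J_0\rangle$ with $J_0=L_*E_0$. Multiplying by $\lambda$ (legitimate under the convention $\langle u,\lambda v\rangle=\lambda\langle u,v\rangle$) and taking real parts yields $\operatorname{Re}\langle E_0,\lambda L_*E_0\rangle=\langle E_0+E,\operatorname{Re}(\lambda L)(E_0+E)\rangle\geq\delta\|E_0+E\|^2=\delta(\|E_0\|^2+\|E\|^2)\geq\delta\|E_0\|^2$. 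Hence (LM) holds for $L_*$ on $\mathcal{U}$ with the same $\lambda,\delta$, and Lax--Milgram applied on $\mathcal{U}$ gives that $L_*$ is invertible.

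For the duality formula I would argue entirely at the level of solutions, avoiding any $3\times3$ block inversion. Given $E_0\in\mathcal{U}$ with direct solution $(J_0,E,J)$, rewrite $J_0+J=L(E_0+E)$ as $E_0+E=L^{-1}(J_0+J)$. Reading off the memberships $J_0\in\mathcal{U}$, $J\in\mathcal{J}$, $E_0\in\mathcal{U}$, $E\in\mathcal{E}$, this exhibits $(\tilde J_0,\tilde E,\tilde J):=(E_0,J,E)\in\mathcal{U}\times\mathcal{J}\times\mathcal{E}$ as a solution of the dual $Z$-problem $(\mathcal{H},\mathcal{U},\mathcal{J},\mathcal{E},L^{-1})$ at the source $\tilde E_0=J_0$. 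Because (LM) holds for $L^{-1}$ by the first step, hypothesis (H0) holds for the dual decomposition, so Theorem \ref{ThmMainClassicalZProbEffOp} applies to the dual problem: $\left(L^{-1}\right)_{*'}$ exists, is unique, and satisfies $\tilde J_0=\left(L^{-1}\right)_{*'}\tilde E_0$. Thus $E_0=\left(L^{-1}\right)_{*'}J_0=\left(L^{-1}\right)_{*'}L_*E_0$ for every $E_0\in\mathcal{U}$, i.e.\ $\left(L^{-1}\right)_{*'}L_*=I_{\mathcal{U}}$, and invertibility of $L_*$ then gives $\left(L^{-1}\right)_{*'}=(L_*)^{-1}$.

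The hard part will be the inheritance of (LM) by $L_*$ in the second step: everything else is either a direct citation (Lax--Milgram, Theorem \ref{ThmMainClassicalZProbEffOp}) or bookkeeping of which subspace each field component inhabits, whereas the energy identity is what converts coercivity of $L$ on all of $\mathcal{H}$ into coercivity of $L_*$ on $\mathcal{U}$. I emphasize that it is precisely the invertibility of $L_*$ obtained this way that legitimizes the final passage from $\left(L^{-1}\right)_{*'}L_*=I_{\mathcal{U}}$ to $\left(L^{-1}\right)_{*'}=(L_*)^{-1}$; a secondary point requiring care is confirming that the dual problem satisfies the hypothesis of Theorem \ref{ThmMainClassicalZProbEffOp}, which is why establishing (LM) for $L^{-1}$ first is logically necessary rather than cosmetic.
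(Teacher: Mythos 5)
Your proposal is correct and follows the route the paper intends: the paper offers no written proof beyond the remark that the result ``yields immediately'' from Theorem \ref{ThmMainClassicalZProbEffOp}, and the details you supply --- coercivity of $L^{-1}$ via Lax--Milgram, the energy identity $\langle E_0,L_*E_0\rangle=\langle E_0+E,L(E_0+E)\rangle$ giving \hyperlink{(LM)}{(LM)} for $L_*$, and the solution-level identification $(\tilde J_0,\tilde E,\tilde J)=(E_0,J,E)$ for the dual problem --- are exactly the ones needed to make that immediacy rigorous. No gaps.
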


One immediate application of this result is the following abstract version of the Keller-Dykhne-Mendelson duality \cite[Chap.\ 3]{Milton:2022:TOC}.

\begin{corollary}\label{cor:AbstractKellerDykhneMendelsonDuality}
    If $\mathcal{H}$ is a Hilbert space with orthogonal triple decomposition \eqref{DefZProbHOrthTri} and $R\in \mathcal{L}(\mathcal{H})$ is invertible with $R^*=R^{-1}=-R,$  $R\mathcal{U}\subseteq \mathcal{U}$, $R\mathcal{E}\subseteq \mathcal{J}$, and $R\mathcal{J}\subseteq \mathcal{E}$, then $R\mathcal{U}=\mathcal{U}, R\mathcal{E}=\mathcal{J}$, and $R\mathcal{J}=\mathcal{E}$. Moreover, for any $L\in \mathcal{L}(\mathcal{H})$ in which hypothesis \hyperlink{(LM)}{(LM)} is true, the following equality holds (as functions on $\mathcal{U}$):
    \begin{gather}
        L_*=R[(R^{-1}L^{-1}R)_*]^{-1}R^{-1}.
    \end{gather}
\end{corollary}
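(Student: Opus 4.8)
The plan is to prove the two assertions separately: first the subspace identities, which are purely algebraic consequences of $R^{-1}=-R$, and then the duality formula, which reduces to Corollary \ref{cor:FundThmOnDualityZProbs} after a conjugation-by-$R$ change of variables.

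For the subspace identities, I would first note that $R^{-1}=-R$ gives $R^2=-I$, so that $R^2$ fixes every subspace (as $R^2\mathcal{W}=-\mathcal{W}=\mathcal{W}$ for any subspace $\mathcal{W}$). Applying $R$ to the inclusion $R\mathcal{U}\subseteq\mathcal{U}$ yields $\mathcal{U}=R^2\mathcal{U}\subseteq R\mathcal{U}\subseteq\mathcal{U}$, hence $R\mathcal{U}=\mathcal{U}$. The same squaring applied to the pair $R\mathcal{E}\subseteq\mathcal{J}$ and $R\mathcal{J}\subseteq\mathcal{E}$ gives $\mathcal{E}=R^2\mathcal{E}\subseteq R\mathcal{J}\subseteq\mathcal{E}$ and $\mathcal{J}=R^2\mathcal{J}\subseteq R\mathcal{E}\subseteq\mathcal{J}$, forcing $R\mathcal{J}=\mathcal{E}$ and $R\mathcal{E}=\mathcal{J}$. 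Since $R^{-1}=-R$, the operator $R^{-1}$ permutes these subspaces exactly as $R$ does, namely $R^{-1}\mathcal{U}=\mathcal{U}$, $R^{-1}\mathcal{E}=\mathcal{J}$, $R^{-1}\mathcal{J}=\mathcal{E}$, a fact I will use throughout.

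Next I would set $M:=R^{-1}L^{-1}R$ and check that hypothesis \hyperlink{(LM)}{(LM)} holds for $M$, so that $M_*=(R^{-1}L^{-1}R)_*$ is a well-defined (unique) effective operator. By Corollary \ref{cor:FundThmOnDualityZProbs}, \hyperlink{(LM)}{(LM)} for $L$ implies $L$ is invertible and \hyperlink{(LM)}{(LM)} holds for $L^{-1}$, say $\operatorname{Re}(\mu L^{-1})\geq\varepsilon I$ with $\mu\neq0$, $\varepsilon>0$. Because $R^*=R^{-1}$, $R$ is unitary, and for a unitary $U$ one has $\operatorname{Re}(U^*AU)=U^*(\operatorname{Re}A)U$; taking $U=R$ gives $\operatorname{Re}(\mu M)=R^*\operatorname{Re}(\mu L^{-1})R\geq\varepsilon R^*R=\varepsilon I$, so \hyperlink{(LM)}{(LM)} holds for $M$ and $M_*$ exists by Theorem \ref{ThmMainClassicalZProbEffOp}.

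The core step is a change of variables matching solutions of the dual $Z$-problem $(\mathcal{H},\mathcal{U},\mathcal{J},\mathcal{E},L^{-1})$ with solutions of the direct $Z$-problem $(\mathcal{H},\mathcal{U},\mathcal{E},\mathcal{J},M)$. Starting from a solution $(J_0,E',J')\in\mathcal{U}\times\mathcal{J}\times\mathcal{E}$ of the dual problem at $E_0$, that is $J_0+J'=L^{-1}(E_0+E')$, I would apply $R^{-1}$ and insert $RR^{-1}=I$ to get $R^{-1}(J_0+J')=MR^{-1}(E_0+E')$. Writing $\hat E_0=R^{-1}E_0$, $\hat J_0=R^{-1}J_0$, $\hat E=R^{-1}E'$, $\hat J=R^{-1}J'$, the subspace identities place $\hat E_0,\hat J_0\in\mathcal{U}$, $\hat E\in\mathcal{E}$, $\hat J\in\mathcal{J}$, and the equation reads $\hat J_0+\hat J=M(\hat E_0+\hat E)$, a solution of the $M$-problem at $\hat E_0$. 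Using $J_0=(L^{-1})_{*'}E_0$ (valid by Corollary \ref{cor:FundThmOnDualityZProbs}) and $\hat J_0=M_*\hat E_0$ together with $E_0=R\hat E_0$ gives $M_*\hat E_0=R^{-1}(L^{-1})_{*'}R\hat E_0$; since $\hat E_0=R^{-1}E_0$ ranges over all of $\mathcal{U}$ as $E_0$ does, this yields $M_*=R^{-1}(L^{-1})_{*'}R$ on $\mathcal{U}$. Finally, substituting $(L^{-1})_{*'}=(L_*)^{-1}$ from Corollary \ref{cor:FundThmOnDualityZProbs} gives $M_*=R^{-1}(L_*)^{-1}R$, and inverting (both $R$ and $L_*$ are invertible) and conjugating back yields $L_*=R[M_*]^{-1}R^{-1}=R[(R^{-1}L^{-1}R)_*]^{-1}R^{-1}$, as claimed. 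I expect the main obstacle to be the bookkeeping in this change of variables: one must verify that $R^{-1}$ carries the $\mathcal{J}$- and $\mathcal{E}$-components of a dual solution into the $\mathcal{E}$- and $\mathcal{J}$-components respectively --- the swap that aligns the dual and direct triple decompositions --- and that the correspondence is bijective, so the identity between effective operators genuinely transfers.
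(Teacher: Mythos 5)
Your proof is correct and takes essentially the same route as the paper: both establish the subspace identities from $R^{-1}=-R$ (yours via $R^2=-I$, the paper's via $R^{-1}\mathcal{W}=R\mathcal{W}\subseteq\mathcal{W}$) and both transport a solution triple by $R^{-1}$ so that the swap $\mathcal{E}\leftrightarrow\mathcal{J}$ turns the problem for $L^{-1}$ into the direct $Z$-problem for $R^{-1}L^{-1}R$, finishing with Corollary~\ref{cor:FundThmOnDualityZProbs}. The only difference is organizational: you factor explicitly through the dual problem $(\mathcal{H},\mathcal{U},\mathcal{J},\mathcal{E},L^{-1})$ and the identity $(L^{-1})_{*'}=(L_*)^{-1}$ (and you spell out the (LM) check for the conjugated operator, which the paper asserts without detail), whereas the paper applies $R^{-1}L^{-1}$ to a solution of the $L$-problem in a single step.
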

\begin{proof}
    First, the assumptions imply that $R^{-1}\mathcal{U}=R\mathcal{U}\subseteq \mathcal{U}$, which implies $R\mathcal{U}=\mathcal{U}$. Similarly, $R^{-1}\mathcal{E}=R\mathcal{E}\subseteq\mathcal{J}$ and $R\mathcal{J}\subseteq \mathcal{E}$ imply $R\mathcal{E}=\mathcal{J}$ and $\mathcal{E}=R\mathcal{J}$. Now suppose that $L\in \mathcal{L}(\mathcal{H})$ satisfies the hypothesis \hyperlink{(LM)}{(LM)}. Then \hyperlink{(LM)}{(LM)} also holds for $R^{-1}LR$. By Theorem \ref{ThmMainClassicalZProbEffOp} and Corollary \ref{cor:FundThmOnDualityZProbs}, the $Z$-problems $(\mathcal{H},\mathcal{U},\mathcal{E},\mathcal{J},L)$ and $(\mathcal{H},\mathcal{U},\mathcal{E},\mathcal{J},R^{-1}L^{-1}R)$ have unique effective operators $L_*$ and $(R^{-1}LR)_*$, respectively, such that $(R^{-1}LR)_*$ is invertible. Let $E_0\in \mathcal{U}$. Then there exists $\left(J_{0},E,J\right)\in\mathcal{U}\times\mathcal{E}\times\mathcal{J}$ satisfying 
	\begin{equation}
		J_{0}+J=L \left(  E_{0}+E\right),
	\end{equation}
    and hence $J_0=L_*E_0$. But then
    \begin{equation}
		(R^{-1}L^{-1}R)(R^{-1}J_{0}+R^{-1}J)=R^{-1}E_{0}+R^{-1}E
	\end{equation}
    with $R^{-1}J_{0}\in \mathcal{U}$ and $\left(R^{-1}E_{0},R^{-1}J,R^{-1}E\right)\in\mathcal{U}\times\mathcal{E}\times\mathcal{J}$, which implies $R^{-1}E_{0}=(R^{-1}L^{-1}R)_*R^{-1}J_{0}$. Thus, one has $R[(R^{-1}L^{-1}R)_*]^{-1}R^{-1}E_0=J_0=L_*E_0$. This proves the corollary. 
\end{proof}

Under hypothesis \hyperlink{(H1)}{(H1)}, the following theorem gives a variational formulation for the effective operator \cite[Chap.\ 13]{Milton:2022:TOC}, \cite[Sec.\ 2.7]{Milton:2016:ETC}, \cite{Beard:2023:EOT}.
\begin{theorem}[Dirichlet minimization principle]\label{ThmClassicalDiriMinPrin}
If $L\in \mathcal{L}(\mathcal{H}),$ $L^*=L$, $L_{11}\geq 0$, and $L_{11}$ is invertible [i.e., \hyperlink{(H1)}{(H1)} holds] then the effective operator $L_*$ is the unique self-adjoint operator satisfying the minimization principle
\begin{equation}
		( E_0,L_*E_0 )=\min_{E\in\mathcal{E}}( E_0+E,L(E_0+E) ),\;\forall E_0\in\mathcal{U},\label{ClassicalDirMinPrincEffOp}
	\end{equation}
	and, for each $E_0\in\mathcal{U}$, the minimizer is unique and given by
	\begin{equation}
		E=-L_{11}^{-1}L_{10}E_0.
	\end{equation}
	Moreover, one has the following upper bound on the effective operator:
	\begin{equation}
		L_*\leq L_{00},
	\end{equation}
	where $L_{00}$ is the compression of $L$ to $\mathcal{U}$, i.e.,
	\begin{align}
	    L_{00}=\Gamma_0L\Gamma_0|_{\mathcal{U}}.
	\end{align}
\end{theorem}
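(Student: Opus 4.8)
The plan is to reduce everything to the self-adjoint compression of $L$ to $\mathcal{U}\ho\mathcal{E}$ and then complete the square. First I would note that \hyperlink{(H1)}{(H1)} implies \hyperlink{(H0)}{(H0)}, so Theorem \ref{ThmMainClassicalZProbEffOp} already gives existence, uniqueness, and the Schur complement formula $L_*=L_{00}-L_{01}L_{11}^{-1}L_{10}$. Since $L^*=L$ forces $L_{00}^*=L_{00}$, $L_{11}^*=L_{11}$, and $L_{01}^*=L_{10}$, a one-line check shows $L_*^*=L_*$, so $L_*$ is self-adjoint. I would also record the elementary fact that a self-adjoint $L_{11}\geq 0$ which is invertible is in fact bounded below, $L_{11}\geq\delta I$ for some $\delta>0$ (its spectrum lies in $[0,\infty)$ and avoids $0$, hence in $[\delta,\infty)$); in particular $\langle F,L_{11}F\rangle=0$ only for $F=0$. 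This strict positivity is the one place the full strength of \hyperlink{(H1)}{(H1)}, beyond mere invertibility, is needed.

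Next, fix $E_0\in\mathcal{U}$ and set $Q(E)=\langle E_0+E,L(E_0+E)\rangle$ for $E\in\mathcal{E}$. Because $E_0+E\in\mathcal{U}\ho\mathcal{E}$ is paired against an element of the same subspace, the $\mathcal{J}$-row of $L$ drops out and $Q$ depends only on the self-adjoint $2\times2$ block $\begin{bmatrix}L_{00}&L_{01}\\L_{10}&L_{11}\end{bmatrix}$, so $Q(E)=\langle E_0,L_{00}E_0\rangle+\langle E_0,L_{01}E\rangle+\langle E,L_{10}E_0\rangle+\langle E,L_{11}E\rangle$. The heart of the argument is the completion-of-squares identity
\[
Q(E)=\langle E_0,L_*E_0\rangle+\langle E-\hat E,\,L_{11}(E-\hat E)\rangle,\qquad \hat E:=-L_{11}^{-1}L_{10}E_0,
\]
which I would verify by expanding the right-hand side using $L_{11}\hat E=-L_{10}E_0$, $L_{11}^*=L_{11}$, and $L_{01}^*=L_{10}$: the cross terms reproduce $\langle E_0,L_{01}E\rangle+\langle E,L_{10}E_0\rangle$ and the constant term collapses to $\langle E_0,(L_{00}-L_{01}L_{11}^{-1}L_{10})E_0\rangle=\langle E_0,L_*E_0\rangle$.

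From this identity the conclusions follow at once. Since $L_{11}\geq 0$, the remainder is nonnegative, so $Q(E)\geq\langle E_0,L_*E_0\rangle$ with equality iff $\langle E-\hat E,L_{11}(E-\hat E)\rangle=0$; by the strict positivity above this forces $E=\hat E$, giving simultaneously the minimum value $\langle E_0,L_*E_0\rangle$ and uniqueness of the minimizer $E=-L_{11}^{-1}L_{10}E_0$. For uniqueness of the self-adjoint realizer: if $M=M^*$ also satisfies $\langle E_0,ME_0\rangle=\min_{E\in\mathcal{E}}Q(E)=\langle E_0,L_*E_0\rangle$ for all $E_0\in\mathcal{U}$, then $M$ and $L_*$ have the same quadratic form, and since a bounded self-adjoint operator is determined by its quadratic form via polarization, $M=L_*$. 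Finally, evaluating the principle at the admissible choice $E=0$ gives $\langle E_0,L_*E_0\rangle\leq\langle E_0,LE_0\rangle=\langle E_0,L_{00}E_0\rangle$ for all $E_0\in\mathcal{U}$, which is exactly $L_*\leq L_{00}$ in the Loewner order.

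I do not expect a genuine conceptual obstacle here; the difficulty is purely in the bookkeeping. The two points requiring care are keeping the inner-product convention $\langle u,\lambda v\rangle=\lambda\langle u,v\rangle$ straight through the cross terms so that the square completes correctly in the complex case, and justifying the upgrade from $L_{11}\geq 0$ with invertibility to a strict lower bound, since that strictness is precisely what converts "a minimizer exists" into "the minimizer is unique."
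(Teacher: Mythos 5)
Your proof is correct, and it is the standard completion-of-squares argument that the paper intends (the paper states this theorem without proof, deferring to the cited literature, and everything downstream — e.g.\ the proofs of Corollaries \ref{ThmClassicalThomMinPrin} and \ref{prop:EffOpMonotConvavity} — is consistent with exactly this argument). You also correctly handle the two delicate points: upgrading $L_{11}\geq 0$ plus invertibility to $L_{11}\geq\delta I$ to get uniqueness of the minimizer, and invoking self-adjointness so that polarization pins down $L_*$ from its quadratic form in both the real and complex cases.
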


This theorem has several immediate consequences. One is the Thomson minimization principle \cite[Chap.\ 13]{Milton:2022:TOC}, \cite[Sec.\ 2.7]{Milton:2016:ETC}, \cite{Beard:2023:EOT} which is an immediate consequence of duality, using Theorem \ref{ThmClassicalDiriMinPrin} and Corollary \ref{cor:FundThmOnDualityZProbs}.
\begin{corollary}[Thomson minimization principle]\label{ThmClassicalThomMinPrin}
If $L\in \mathcal{L}(\mathcal{H}),$ $L^*=L\geq 0,$ and $L$ is invertible [i.e., \hyperlink{(H2)}{(H2)} holds], then $L_*$ is invertible and $(L_*)^{-1}$ is the unique self-adjoint operator satisfying the minimization principle
	\begin{gather}
		(J_0,(L_*)^{-1}J_0 )=\min_{J\in\mathcal{J}}( J_0+J,L^{-1}(J_0+J)),\; \forall J_0\in\mathcal{U},
	\end{gather}
	and, for each $J_0\in\mathcal{U}$, the minimizer is unique and given by
	\begin{equation}
		J=-L_{22}^{-1}L_{20}J_0.
	\end{equation}
	Moreover, one has the following upper and lower bounds on the effective operator:
	\begin{equation}
		0\leq [(L^{-1})_{00}]^{-1}\leq L_*\leq L_{00},\label{ClassicalUpperLowerBddsEffOp}
	\end{equation}
	where $(L^{-1})_{00}$ is the compression of $L^{-1}$ to $\mathcal{U}$, i.e.,
	\begin{align}
	    (L^{-1})_{00}=\Gamma_0L^{-1}\Gamma_0|_{\mathcal{U}},
	\end{align}
    which is invertible in $\mathcal{L}(\mathcal{U})$ with inverse $[(L^{-1})_{00}]^{-1}$.
\end{corollary}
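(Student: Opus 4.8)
The plan is to obtain the Thomson principle entirely by duality, reducing it to the Dirichlet principle (Theorem \ref{ThmClassicalDiriMinPrin}) applied to the dual $Z$-problem. First I would observe that hypothesis \hyperlink{(H2)}{(H2)} is self-dual: since $L^*=L\geq 0$ and $L$ is invertible, $L$ is bounded below by a positive multiple of the identity, so \hyperlink{(LM)}{(LM)} holds with $\lambda=1$; moreover $L^{-1}$ is again self-adjoint, positive, and invertible, so \hyperlink{(H2)}{(H2)} (hence \hyperlink{(H1)}{(H1)}) holds for the dual problem $(\mathcal{H},\mathcal{U},\mathcal{J},\mathcal{E},L^{-1})$ as well. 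Corollary \ref{cor:FundThmOnDualityZProbs} then immediately gives that $L_*$ is invertible and that the dual effective operator satisfies $(L^{-1})_{*'}=(L_*)^{-1}$; this already settles the invertibility claim.

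Next I would apply the Dirichlet principle (Theorem \ref{ThmClassicalDiriMinPrin}) to the dual $Z$-problem, which is legitimate by the verification just made. The only care needed is bookkeeping the relabeling $\mathcal{E}\leftrightarrow\mathcal{J}$: in the dual problem the subspace one minimizes over is $\mathcal{J}$, the driving datum $J_0$ lives in $\mathcal{U}$, and the effective operator is $(L^{-1})_{*'}$. The Dirichlet principle then reads $(J_0,(L^{-1})_{*'}J_0)=\min_{J\in\mathcal{J}}(J_0+J,L^{-1}(J_0+J))$ with a unique minimizer given by the Dirichlet minimizer formula in the blocks of $L^{-1}$ relative to the dual decomposition (i.e.\ the block of $L^{-1}$ compressed to $\mathcal{J}$ plays the role of the invertible $(1,1)$-block). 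Substituting $(L^{-1})_{*'}=(L_*)^{-1}$ converts this verbatim into the asserted Thomson minimization principle and identifies the minimizer.

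For the two-sided bound I would use both principles. The upper bound $L_*\leq L_{00}$ is just the upper bound in Theorem \ref{ThmClassicalDiriMinPrin} applied to the direct problem (valid since \hyperlink{(H2)}{(H2)}$\Rightarrow$\hyperlink{(H1)}{(H1)}). For the lower bound, the upper bound of the Dirichlet principle applied to the dual problem gives $(L_*)^{-1}=(L^{-1})_{*'}\leq (L^{-1})_{00}$, where $(L^{-1})_{00}$ is the compression of $L^{-1}$ to $\mathcal{U}$. Since $L^{-1}$ is bounded below by a positive multiple of $I$, so is this compression; hence $(L^{-1})_{00}$ is positive definite and invertible, which gives the invertibility claim and $[(L^{-1})_{00}]^{-1}>0$. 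Finally, inverting the operator inequality $0<(L_*)^{-1}\leq (L^{-1})_{00}$ and using that inversion reverses the Loewner order on positive invertible operators yields $[(L^{-1})_{00}]^{-1}\leq L_*$, completing the chain $0\leq [(L^{-1})_{00}]^{-1}\leq L_*\leq L_{00}$.

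The main obstacle is less a single hard step than making the order-reversal argument airtight: to invert inequalities I must know that all the operators in sight ($L_*$, $(L_*)^{-1}$, $(L^{-1})_{00}$) are genuinely positive definite, not merely self-adjoint. I would secure this from coercivity---feeding $L\geq\delta I$ (and $L^{-1}\geq\delta' I$) into the minimization principles via $\|E_0+E\|^2\geq\|E_0\|^2$ to get $L_*\geq\delta I$ and $(L^{-1})_{*'}\geq\delta' I$---and then invoke the standard fact that $0<A\leq B$ implies $B^{-1}\leq A^{-1}$ for bounded positive invertible operators. The remaining subtlety is purely notational: carefully tracking which blocks of $L^{-1}$ take over the roles of $L_{11}$ and $L_{00}$ under the $\mathcal{E}\leftrightarrow\mathcal{J}$ swap, so that the minimizer and the lower bound are expressed in the correct blocks.
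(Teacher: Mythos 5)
Your proof is correct and follows exactly the route the paper takes: the paper presents this corollary as an immediate consequence of Theorem \ref{ThmClassicalDiriMinPrin} applied to the dual $Z$-problem $(\mathcal{H},\mathcal{U},\mathcal{J},\mathcal{E},L^{-1})$ combined with the duality formula $(L^{-1})_{*'}=(L_*)^{-1}$ from Corollary \ref{cor:FundThmOnDualityZProbs}, which is precisely your argument, including the order-reversal step for the lower bound. One small caveat: the minimizer that this duality argument actually produces is $J=-[(L^{-1})_{22}]^{-1}(L^{-1})_{20}J_0$, i.e.\ it is built from blocks of $L^{-1}$, so the printed formula $J=-L_{22}^{-1}L_{20}J_0$ has to be read with $L_{22},L_{20}$ denoting the corresponding blocks of $L^{-1}$ in the dual decomposition --- your derivation yields the correct minimizer, but it does not convert ``verbatim'' into the formula as literally typeset (the two expressions differ in general, as a $3\times 3$ example shows).
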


The next result is another consequence of Theorem \ref{ThmClassicalDiriMinPrin} and it gives the monotonicity and concavity of the effective operator map (cf.\ \cite[Sec.\ 13.2]{Milton:2022:TOC}, \cite[Cor.\ 5]{Beard:2024:MMC})
\begin{gather}
    (\cdot)_*:\mathcal{L}(\mathcal{H})_{++}\rightarrow \mathcal{L}(\mathcal{U})_{++},\;L\mapsto L_*,\;\forall L\in \mathcal{L}(\mathcal{H})_{++},
\end{gather}
where $\mathcal{L}(\mathcal{H})_{++}$ denotes the subset of $\mathcal{L}(\mathcal{H})$ of all positive operators, i.e., all $L\in \mathcal{L}(\mathcal{H})$ satisfying \hyperlink{(H2)}{(H2)}, and $\mathcal{L}(\mathcal{U})_{++}$ is defined similarly. 
\begin{corollary}\label{prop:EffOpMonotConvavity}
    Let $L, M\in \mathcal{L}(\mathcal{H})_{++}$. Then the following statements are true:
    \begin{enumerate}
        \item[(i)] If $L\leq M$, then \begin{equation}
            L_*\leq M_*.
        \end{equation}
        \item[(ii)] If $t\in [0,1]$, then
            \begin{gather}
                t L_*+(1-t)M_*\leq [tL +(1-t)M]_*.
            \end{gather}
    \end{enumerate}
\end{corollary}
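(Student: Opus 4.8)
The plan is to obtain both statements directly from the Dirichlet minimization principle (Theorem~\ref{ThmClassicalDiriMinPrin}), which is available here because \hyperlink{(H2)}{(H2)} implies \hyperlink{(H1)}{(H1)}. For any $L\in\mathcal{L}(\mathcal{H})_{++}$ I would introduce the quadratic functional $\Phi_L(E_0,E):=\langle E_0+E,\,L(E_0+E)\rangle$ on $\mathcal{U}\times\mathcal{E}$, so that the theorem gives $\langle E_0,L_*E_0\rangle=\min_{E\in\mathcal{E}}\Phi_L(E_0,E)$ for every $E_0\in\mathcal{U}$. Since $L_*$, $M_*$, and every effective operator appearing below are self-adjoint (again by Theorem~\ref{ThmClassicalDiriMinPrin}), it suffices to compare these scalar quantities for each fixed $E_0$, because the Loewner order on self-adjoint operators is characterized by the associated quadratic forms.

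For part (i), I would note that $L\leq M$ means $\langle v,Lv\rangle\leq\langle v,Mv\rangle$ for all $v\in\mathcal{H}$, and in particular for all $v=E_0+E$. Hence $\Phi_L(E_0,\cdot)\leq\Phi_M(E_0,\cdot)$ pointwise on $\mathcal{E}$, and passing to the minimum preserves the inequality: evaluating the smaller functional at the minimizer of the larger one shows $\min_E\Phi_L(E_0,E)\leq\min_E\Phi_M(E_0,E)$. Therefore $\langle E_0,L_*E_0\rangle\leq\langle E_0,M_*E_0\rangle$ for all $E_0\in\mathcal{U}$, i.e., $L_*\leq M_*$.

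For part (ii), set $N:=tL+(1-t)M$. First I would verify that $N\in\mathcal{L}(\mathcal{H})_{++}$: since $L\geq\alpha I$ and $M\geq\beta I$ for some $\alpha,\beta>0$ (as positive, invertible, bounded operators), one has $N\geq\bigl(t\alpha+(1-t)\beta\bigr)I\geq\min(\alpha,\beta)\,I>0$ for every $t\in[0,1]$, so $N$ is self-adjoint, positive, and invertible; thus \hyperlink{(H2)}{(H2)} holds for $N$, $N_*$ exists, and obeys the minimization principle. The objective then splits as $\Phi_N(E_0,E)=t\,\Phi_L(E_0,E)+(1-t)\,\Phi_M(E_0,E)$, and the decisive step is the superadditivity of the minimum: for each $E$ the quantity $t\,\Phi_L(E_0,E)+(1-t)\,\Phi_M(E_0,E)$ dominates $t\min_{E'}\Phi_L(E_0,E')+(1-t)\min_{E'}\Phi_M(E_0,E')$, so taking the minimum over $E$ yields $\langle E_0,N_*E_0\rangle\geq t\langle E_0,L_*E_0\rangle+(1-t)\langle E_0,M_*E_0\rangle$ for all $E_0\in\mathcal{U}$. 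This is exactly $tL_*+(1-t)M_*\leq[tL+(1-t)M]_*$.

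The calculations are routine, and I expect no genuine obstacle. The only points demanding care are confirming that $N$ remains positive definite and invertible so that the variational principle applies to it, and keeping the direction of the two minimum inequalities straight: monotonicity uses that a pointwise-smaller functional has a smaller minimum, while concavity uses that the minimum of a convex combination dominates the convex combination of the minima. The whole argument is the operator-theoretic abstraction of the familiar fact that a minimized quadratic energy is monotone and concave in the underlying operator.
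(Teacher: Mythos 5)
Your proposal is correct and follows essentially the same route as the paper: both parts are deduced from the Dirichlet minimization principle, with (i) obtained by comparing the minimized quadratic forms pointwise and (ii) by the superadditivity of the minimum applied to the split objective $\Phi_{tL+(1-t)M}=t\Phi_L+(1-t)\Phi_M$. Your extra check that $tL+(1-t)M$ remains in $\mathcal{L}(\mathcal{H})_{++}$ is a detail the paper asserts without proof, but otherwise the arguments coincide.
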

\begin{proof}
    (i): If $L\leq M$, then by Theorem \ref{ThmClassicalDiriMinPrin},
    \begin{equation*}
		( E_0,L_*E_0 )=\min_{E\in\mathcal{E}}( E_0+E,L(E_0+E) )\leq \min_{E\in\mathcal{E}}( E_0+E,M(E_0+E) )=( E_0,M_*E_0 )
	\end{equation*}
    for all $E_0\in\mathcal{U}$, which proves (i).

    (ii): Fix $t\in [0,1]$. Then $\ell(t):=tL +(1-t)M\in \mathcal{L}(\mathcal{H})_{++}$ and for any $E_0\in \mathcal{U},E\in \mathcal{E}$,
    \begin{align*}
        ( E_0+E,\ell(t)(E_0+E) )&=t( E_0+E,L(E_0+E) )+(1-t)( E_0+E,M(E_0+E) )\\
        &\geq t\min_{E\in\mathcal{E}}( E_0\!+\!E,L(E_0\!+\!E) )\!+\!(1-t)\min_{E\in\mathcal{E}}( E_0\!+\!E,M(E_0\!+\!E) ).
    \end{align*}
    The proof of (ii) now follows from this by Theorem \ref{ThmClassicalDiriMinPrin}.    
\end{proof}

The next definition introduces the notion of orthogonal subspace collections and the abstraction of an isotropic multiphase composite, see \cite[Chap.\ 29]{Milton:2022:TOC}, \cite{Milton:2016:SAS}, \cite{Cassier:2016:RAF}.
\begin{definition}[Subspace collections and multiphase composites]\label{def:SubspCollMultiphasComposite}
An orthogonal $Z(n)$-subspace collection $\mathfrak{C}$ is an $(n+4)$-tuple 
\begin{gather}
    \mathfrak{C} = (\mathcal{H},\mathcal{U},\mathcal{E},\mathcal{J},\mathcal{P}_1,\ldots,\mathcal{P}_n),\label{defSubspaceCollection}
\end{gather}
where $\mathcal{H}$ is a Hilbert space with subspaces $\mathcal{U},\mathcal{E},\mathcal{J},\mathcal{P}_1,\ldots,\mathcal{P}_n$ satisfying
\begin{gather}
    \mathcal{H}=\mathcal{U}\overset{\bot }{\oplus }\mathcal{E}\overset{\bot }{\oplus }\mathcal{J}=\mathcal{P}_1\overset{\bot }{\oplus }\cdots\overset{\bot }{\oplus }\mathcal{P}_n.
\end{gather}
Let $\Lambda_i$ denote the orthogonal projection of $\mathcal{H}$ onto $\mathcal{P}_i$, $i=1,\ldots, n$. Then the $Z$-problem 
\begin{gather}
    (\mathcal{H},\mathcal{U},\mathcal{E},\mathcal{J}, L(z)),\label{DefZnProb1}
\end{gather}
where
\begin{gather}
    L(z)=z_1\Lambda_1+\ldots+z_n \Lambda_n,,\; z=(z_1,\ldots, z_n)\in\mathbb{C}^n\label{DefZnProb2}
\end{gather}
is called an (abstract) $n$-phase composite with (orthogonal) $Z(n)$-subspace collection $\mathfrak{C}$.
\end{definition}
Now one is interested in the properties of the associated effective operator map:
\begin{gather}
    L_*(\cdot):D\rightarrow \mathcal{L}(\mathcal{U}),\;
    z\mapsto L_*(z),\\
    D=\bigcup_{\theta\in [0,2\pi)}(e^{i\theta}\mathbb{C}^{+})^n,\label{defDomainDRotatedPolyplanes}
\end{gather}
where $\mathbb{C}^{+}$ denotes the open upper-half of the complex plane $\mathbb{C}$, i.e.,
\begin{gather}
    \mathbb{C}^{+}=\{z\in\mathbb{C}:\operatorname{Im}(z)>0\}.
\end{gather}
The next result from \cite[Sec.\ 3.4, Prop.\ 24]{Stefan:2021:SCA} shows this map is a well-defined function on the domain $D\subseteq \mathbb{C}^n$ with a representation of a special type called a Bessmertny\u{\i} realization.

\begin{proposition}[Effective operator: Bessmertny\u{\i} realization]\label{prop:EffOpBessRealization}
If $z\in D$, then the $Z$-problem \eqref{DefZnProb1}, \eqref{DefZnProb2} has a unique solution for any $E_0\in\mathcal{U}$ and it is given by the formulas (\ref{ClassicSolnZProb}), (\ref{ClassicEffOperFormula}) with $L=L(z)=[L_{ij}(z)]_{i,j=0,1,2}$. Moreover, the effective operator $L_*(z)$ of the $Z$-problem exists, is unique, and is given as the Schur complement
\begin{gather}
    L_*(z)=A(z)/A_{11}(z)=A_{00}(z)-A_{01}(z)A_{11}(z)^{-1}A_{10}(z)
\end{gather}
of a normalized homogeneous positive semidefinite operator pencil
\begin{gather}
    A(z)=\begin{bmatrix}
  A_{00}(z) & A_{01}(z) \\
 A_{10}(z) & A_{11}(z)
\end{bmatrix},\label{EffOpBessRealization1}\\
A(z)= z_1A_1+\cdots +z_n A_n,\label{EffOpBessRealization2}\\
A_i^*=A_i\geq 0,\; (i=1,\ldots, n),\label{EffOpBessRealization3}\\
A_1+\cdots+A_n=I,\label{EffOpBessRealization4}
\end{gather}
where $I$ denotes the identity operator on $\mathcal{U}\overset{\bot}{\mathcal{\oplus}}\mathcal{E}$, and
\begin{gather}
   A(z)=[A_{ij}(z)]_{i,j=0,1}=[L_{ij}(z)]_{i,j=0,1}=(\Gamma_0+\Gamma_1)L(z)(\Gamma_0+\Gamma_1)|_{\mathcal{U}\overset{\bot}{\mathcal{\oplus}}\mathcal{E}}, \label{EffOpBessRealization5}\\
    A_i=(\Gamma_0+\Gamma_1)\Lambda_i(\Gamma_0+\Gamma_1)|_{\mathcal{U}\overset{\bot}{\mathcal{\oplus}}\mathcal{E}}\in \mathcal{L}(\mathcal{U}\overset{\bot}{\mathcal{\oplus}}\mathcal{E}),\;\;i=1,\ldots, n,\label{EffOpBessRealization6}
\end{gather}
i.e., $A(z)$ and $A_i$ are the compression of $L(z)$ and $\Lambda_i$, respectively, to $\mathcal{U}\overset{\bot}{\mathcal{\oplus}}\mathcal{E}$.
\end{proposition}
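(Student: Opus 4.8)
The plan is to reduce everything to Theorem \ref{ThmMainClassicalZProbEffOp}: once hypothesis \hyperlink{(H0)}{(H0)} is verified for $L(z)$, the existence and uniqueness of the solution and of the effective operator, together with the Schur-complement formula \eqref{ClassicEffOperFormula}, follow immediately. I would obtain \hyperlink{(H0)}{(H0)} through the stronger coercivity hypothesis \hyperlink{(LM)}{(LM)}, invoking the implication \eqref{LMImpliesH0}. Fix $z\in D$. By the definition \eqref{defDomainDRotatedPolyplanes} of $D$ there is some $\theta\in[0,2\pi)$ with $e^{-i\theta}z_i\in\mathbb{C}^+$, i.e.\ $\operatorname{Im}(e^{-i\theta}z_i)>0$, for every $i$. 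Setting $\lambda=-ie^{-i\theta}\neq 0$ then gives $\operatorname{Re}(\lambda z_i)=\operatorname{Im}(e^{-i\theta}z_i)>0$ for each $i$.

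The content-bearing step is to exploit the resolution-of-identity structure of the projections $\Lambda_i$. Since $\Lambda_i^*=\Lambda_i$, one has $\operatorname{Re}(\lambda z_i\Lambda_i)=\operatorname{Re}(\lambda z_i)\Lambda_i$, and because the $\Lambda_i$ are pairwise orthogonal and sum to the identity, for any $x\in\mathcal{H}$,
\[
\langle x,\operatorname{Re}(\lambda L(z))x\rangle=\sum_{i=1}^n\operatorname{Re}(\lambda z_i)\|\Lambda_i x\|^2\geq\delta\sum_{i=1}^n\|\Lambda_i x\|^2=\delta\|x\|^2,
\]
where $\delta=\min_i\operatorname{Re}(\lambda z_i)>0$. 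Thus $\operatorname{Re}(\lambda L(z))\geq\delta I$, which is precisely \hyperlink{(LM)}{(LM)}; hence \hyperlink{(H0)}{(H0)} holds, Theorem \ref{ThmMainClassicalZProbEffOp} applies, and it yields the unique solution via \eqref{ClassicSolnZProb}, the existence and uniqueness of $L_*(z)$, and $L_*(z)=L(z)/L_{11}(z)$.

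It remains to recognize this Schur complement as $A(z)/A_{11}(z)$ and to check the three structural properties of the pencil, all of which are routine bookkeeping about compressions. The top-left $2\times 2$ block of the $3\times 3$ representation of $L(z)$ is exactly the compression $A(z)$ to $\mathcal{U}\overset{\bot}{\mathcal{\oplus}}\mathcal{E}$, so $A_{ij}(z)=L_{ij}(z)$ for $i,j=0,1$; in particular $A_{11}(z)=L_{11}(z)$ is invertible and $A(z)/A_{11}(z)=L_*(z)$. Linearity of compression together with $L(z)=\sum_i z_i\Lambda_i$ gives the homogeneous form $A(z)=\sum_i z_i A_i$ with $A_i$ as in \eqref{EffOpBessRealization6}. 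Each $A_i$ is self-adjoint as the compression of the self-adjoint $\Lambda_i$; it is positive semidefinite since $\langle x,A_i x\rangle=\|\Lambda_i x\|^2\geq 0$ for $x\in\mathcal{U}\overset{\bot}{\mathcal{\oplus}}\mathcal{E}$; and $\sum_i A_i=I$ on $\mathcal{U}\overset{\bot}{\mathcal{\oplus}}\mathcal{E}$ because $\sum_i\Lambda_i=I$ and the compression of the identity to a subspace is the identity on that subspace.

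I do not expect a serious obstacle here. The only point requiring care is that the coercivity constant $\delta$ depends on $z$ (the bound is pointwise, not uniform over $D$), which is harmless because Theorem \ref{ThmMainClassicalZProbEffOp} is applied at each fixed $z$ separately. The conceptual heart of the argument is simply that $D$ is engineered so that a single rotation $e^{i\theta}$ carries all phases $z_i$ simultaneously into the open right half-plane, at which point the orthogonality of the $\Lambda_i$ diagonalizes $\operatorname{Re}(\lambda L(z))$ and delivers coercivity essentially for free.
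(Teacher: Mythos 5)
Your proposal is correct and takes essentially the same route as the paper: the paper's proof consists of one sentence invoking Theorem \ref{ThmMainClassicalZProbEffOp} together with the assertion that \hyperlink{(LM)}{(LM)} holds for $L(z)$ whenever $z\in D$, and your rotation argument $\lambda=-ie^{-i\theta}$ with the spectral decomposition $\operatorname{Re}(\lambda L(z))=\sum_i\operatorname{Re}(\lambda z_i)\Lambda_i$ is precisely the verification of that assertion, which the paper leaves to the reader. The remaining bookkeeping about compressions matches the statement of the proposition and needs no further justification.
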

\begin{proof}
    The proof follows immediately from Theorem \ref{ThmMainClassicalZProbEffOp} using the fact that, for any $z\in D,$ the hypothesis \hyperlink{(LM)}{(LM)} holds for the operator $L(z)$ [and hence, by \eqref{LMImpliesH0}, so does hypothesis \hyperlink{(H0)}{(H0)}].
\end{proof}

This next corollary is the abstract version of the Keller-Dykhne-Mendelson duality for $n$-phase composites, cf.\ \cite[Chap.\ 3]{Milton:2022:TOC}.
\begin{corollary}\label{cor:AbstractMultiPhaseKellerDykhneMendelsonDuality}
     Suppose $(\mathcal{H},\mathcal{U},\mathcal{E},\mathcal{J}, L(z))$ is the $Z$-problem of an abstract $n$-phase composite with orthogonal $Z(n)$-subspace collection  $\mathfrak{C} = (\mathcal{H},\mathcal{U},\mathcal{E},\mathcal{J},\mathcal{P}_1,\ldots,\mathcal{P}_n)$ such that there exists $R\in \mathcal{L}(\mathcal{H})$ which is invertible with $R^*=R^{-1}=-R, R\mathcal{U}\subseteq \mathcal{U}$, $R\mathcal{E}\subseteq \mathcal{J}$, $R\mathcal{J}\subseteq \mathcal{E}$, and $R\mathcal{P}_i\subseteq \mathcal{P}_i$ for all $i=1,\ldots, n$. Then the following equality holds (as functions on $\mathcal{U}$) for all $z\in D$:
    \begin{gather}
        L_*(z)=R[L_*(z^{-1})]^{-1}R^{-1},
    \end{gather}
    where $z^{-1}=(z_1^{-1},\ldots, z_n^{-1})$.
\end{corollary}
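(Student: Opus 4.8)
The plan is to reduce this corollary to the already-established abstract Keller--Dykhne--Mendelson duality of Corollary \ref{cor:AbstractKellerDykhneMendelsonDuality}. That result, applied to $L=L(z)$, gives $L(z)_* = R[(R^{-1}L(z)^{-1}R)_*]^{-1}R^{-1}$ as functions on $\mathcal{U}$, provided hypothesis \hyperlink{(LM)}{(LM)} holds for $L(z)$. Since $z\in D$ guarantees \hyperlink{(LM)}{(LM)} for $L(z)$ (as used in the proof of Proposition \ref{prop:EffOpBessRealization}), and the stated conditions on $R$ relative to $\mathcal{U},\mathcal{E},\mathcal{J}$ are exactly those required by Corollary \ref{cor:AbstractKellerDykhneMendelsonDuality}, this formula is available. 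Thus everything comes down to identifying the inner operator: I would show that $R^{-1}L(z)^{-1}R = L(z^{-1})$, which upon substitution yields precisely $L_*(z)=R[L_*(z^{-1})]^{-1}R^{-1}$.

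To establish $R^{-1}L(z)^{-1}R = L(z^{-1})$, I would split it into two computations. First, since the projections $\Lambda_1,\ldots,\Lambda_n$ form a resolution of the identity (mutually orthogonal with sum $I$) and each $z_i\neq 0$ for $z\in D$, a direct check gives $L(z)L(z^{-1})=\sum_{i,j}z_iz_j^{-1}\Lambda_i\Lambda_j=\sum_i\Lambda_i=I$, so $L(z)^{-1}=L(z^{-1})$; I would also note that $z\in D$ implies $z^{-1}\in D$ (inverting a factor $e^{i\theta}w$ with $w\in\mathbb{C}^{+}$ lands in $e^{i(\pi-\theta)}\mathbb{C}^{+}$), so that $L_*(z^{-1})$ is well-defined. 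Second, I would prove that $R$ commutes with each $\Lambda_i$. From $R^{-1}=-R$ and $R\mathcal{P}_i\subseteq\mathcal{P}_i$ one gets $R^{-1}\mathcal{P}_i=R\mathcal{P}_i\subseteq\mathcal{P}_i$, whence invertibility forces $R\mathcal{P}_i=\mathcal{P}_i$. Because $R$ is unitary ($R^*=R^{-1}$), preserving $\mathcal{P}_i$ forces it to preserve $\mathcal{P}_i^{\perp}$ as well, and then the uniqueness of the orthogonal decomposition $x=\Lambda_i x+(I-\Lambda_i)x$ gives $R\Lambda_i=\Lambda_i R$. Consequently $R$ commutes with $L(z^{-1})=\sum_i z_i^{-1}\Lambda_i$, so $R^{-1}L(z)^{-1}R=R^{-1}L(z^{-1})R=L(z^{-1})$.

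The only nontrivial point---and the one I would be most careful about---is the commutation step $R\Lambda_i=\Lambda_i R$, since this is where the unitarity of $R$ is genuinely needed: mere invariance $R\mathcal{P}_i\subseteq\mathcal{P}_i$ does not imply commutation with $\Lambda_i$ for a general invertible $R$, but it does once $R$ is unitary and $R\mathcal{P}_i=\mathcal{P}_i$. Everything else is a direct substitution into the formula of Corollary \ref{cor:AbstractKellerDykhneMendelsonDuality}, so once the identity $R^{-1}L(z)^{-1}R=L(z^{-1})$ is in hand the conclusion follows immediately for every $z\in D$.
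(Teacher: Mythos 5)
Your proof is correct and follows essentially the same route as the paper: reduce to Corollary \ref{cor:AbstractKellerDykhneMendelsonDuality} and verify $R^{-1}L(z)^{-1}R=L(z^{-1})$ via $L(z)^{-1}=L(z^{-1})$ and $R\Lambda_i=\Lambda_iR$. The only difference is that you spell out the commutation step (unitarity plus $R\mathcal{P}_i=\mathcal{P}_i$ forcing invariance of $\mathcal{P}_i^{\perp}$), which the paper asserts without detail; that added care is a welcome refinement, not a deviation.
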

\begin{proof}
    First, it follows from the assumptions that $R\mathcal{P}_i=\mathcal{P}_i$ so that $R\Lambda_i=\Lambda_iR$ for each $i=1,\ldots, n$. Second, if $z=(z_1,\ldots, z_n)\in \mathbb{C}^n$ such that $z_i\not=0$ for all $i$, then $L(z)^{-1}=L(z^{-1})$ and hence $R^{-1}L(z)^{-1}R=R^{-1}L(z^{-1})R=L(z^{-1})$. In particular, if $z\in D$, then $z^{-1}\in D$ and $L_*(z^{-1})=[L(z^{-1})]_*=[L(z)^{-1}]_*$. Combining this with Corollary \ref{cor:AbstractKellerDykhneMendelsonDuality}, for any $z\in D$ one has
    \begin{gather}
        L_*(z)=R[(R^{-1}L(z)^{-1}R)_*]^{-1}R^{-1}=R[L(z^{-1})]_*^{-1}R^{-1}=R[L_*(z^{-1})]^{-1}R^{-1}.
    \end{gather}

\end{proof}

This next corollary is the abstract version of the Wiener bounds and monotonicity of the effective operator for $n$-phase composites, see \cite[Chaps.\ 13 \& 22]{Milton:2022:TOC}, \cite[Chap.\ 2, Sec.\ 7]{Milton:2016:ETC}, \cite[Chap.\ 3, Sec.\ 1]{Cherkaev:2000:VMS}.
\begin{corollary}\label{cor:AbstWienerBounds}
(i) If $z_i>0$ for $i=1,\ldots,n$, then
\begin{gather}
    \left(\sum_{i=1}^nz_i^{-1}\Gamma_0\Lambda_i\Gamma_0|_{\mathcal{U}}\right)^{-1}\leq L_*(z)\leq\sum_{i=1}^nz_i\Gamma_0\Lambda_i\Gamma_0|_{\mathcal{U}},
\end{gather}
where $\Gamma_0\Lambda_i\Gamma_0|_{\mathcal{U}}$ is the compression of $\Lambda_i$ to $\mathcal{U}$.\\
(ii) If $0<z_i\leq w_i$ for $i=1,\ldots,n$, then
\begin{gather}
    L_*(z)\leq L_*(w).
\end{gather}
\end{corollary}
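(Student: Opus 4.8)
The plan is to obtain both parts as immediate consequences of results already established for general positive operators, after exploiting the special structure $L(z) = z_1\Lambda_1 + \cdots + z_n\Lambda_n$. The first observation is that for $z_i > 0$ the operator $L(z)$ is self-adjoint and satisfies $L(z) \geq (\min_i z_i) I > 0$, hence is invertible; thus $L(z) \in \mathcal{L}(\mathcal{H})_{++}$ and hypothesis \hyperlink{(H2)}{(H2)} holds. This is exactly the setting of the Thomson minimization principle (Corollary \ref{ThmClassicalThomMinPrin}) and of the monotonicity statement (Corollary \ref{prop:EffOpMonotConvavity}), which I would invoke for parts (i) and (ii), respectively.

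For part (i), I would start from the two-sided bound $[(L^{-1})_{00}]^{-1} \leq L_*(z) \leq L_{00}$ furnished by Corollary \ref{ThmClassicalThomMinPrin}, and then identify the outer operators explicitly. Because the $\Lambda_i$ are the orthogonal projections onto the mutually orthogonal summands $\mathcal{P}_i$, they form a resolution of the identity, i.e.\ $\Lambda_i\Lambda_j = \delta_{ij}\Lambda_i$ and $\sum_i \Lambda_i = I$. A short computation then gives $L(z)^{-1} = \sum_{i=1}^n z_i^{-1}\Lambda_i$, since $\left(\sum_i z_i\Lambda_i\right)\left(\sum_j z_j^{-1}\Lambda_j\right) = \sum_i \Lambda_i = I$. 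Compressing to $\mathcal{U}$ and using linearity of the map $\Gamma_0(\cdot)\Gamma_0|_{\mathcal{U}}$, I obtain
\begin{gather*}
L_{00} = \sum_{i=1}^n z_i\, \Gamma_0\Lambda_i\Gamma_0|_{\mathcal{U}}, \qquad (L^{-1})_{00} = \sum_{i=1}^n z_i^{-1}\, \Gamma_0\Lambda_i\Gamma_0|_{\mathcal{U}},
\end{gather*}
which are precisely the two operators appearing in the statement. Substituting these identities into the Thomson bound yields part (i); the invertibility of $(L^{-1})_{00}$ needed to form the lower bound is guaranteed by Corollary \ref{ThmClassicalThomMinPrin} itself.

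For part (ii), I would observe that $L(w) - L(z) = \sum_{i=1}^n (w_i - z_i)\Lambda_i \geq 0$ whenever $w_i \geq z_i$, since each summand is a nonnegative scalar multiple of an orthogonal projection; together with $L(z), L(w) \in \mathcal{L}(\mathcal{H})_{++}$ this gives $L(z) \leq L(w)$, and monotonicity of the effective operator map (Corollary \ref{prop:EffOpMonotConvavity}(i)) immediately yields $L_*(z) \leq L_*(w)$. There is no serious obstacle: the whole argument is bookkeeping built on the resolution-of-identity relations for the $\Lambda_i$, so the only point requiring genuine care is the verification that $L(z)^{-1} = \sum_i z_i^{-1}\Lambda_i$ and the correct transcription of the compressions, both of which rest on the orthogonality $\Lambda_i\Lambda_j = 0$ ($i \neq j$) inherited from the orthogonal direct sum $\mathcal{H} = \mathcal{P}_1\,\overset{\bot}{\oplus}\cdots\overset{\bot}{\oplus}\,\mathcal{P}_n$.
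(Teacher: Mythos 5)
Your proposal is correct and follows essentially the same route as the paper: part (i) is obtained by substituting the explicit formulas $L(z)_{00}=\sum_i z_i\Gamma_0\Lambda_i\Gamma_0|_{\mathcal{U}}$ and $L(z)^{-1}=\sum_i z_i^{-1}\Lambda_i$ into the two-sided bound of Corollary \ref{ThmClassicalThomMinPrin}, and part (ii) follows from $L(z)\leq L(w)$ together with the monotonicity in Corollary \ref{prop:EffOpMonotConvavity}(i). Your added justification that $L(z)\geq(\min_i z_i)I>0$ is a harmless (and correct) elaboration of the positivity the paper simply asserts.
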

\begin{proof}
    The proof of (i) is immediate from 
    Corollary \ref{ThmClassicalThomMinPrin} by noting that if $z\in (0,\infty)^n$, then
    \begin{gather}
        L(z)^*=L(z)\geq 0,\\
        L(z)_{00}=\Gamma_0L(z)\Gamma_0|_{\mathcal{U}}=\sum_{i=1}^nz_i\Gamma_0\Lambda_i\Gamma_0|_{\mathcal{U}},\\
        L(z)^{-1}=\sum_{i=1}^nz_i^{-1}\Lambda_i,\;[(L(z)^{-1})_{00}]^{-1}=\left(\sum_{i=1}^nz_i^{-1}\Gamma_0\Lambda_i\Gamma_0|_{\mathcal{U}}\right)^{-1}.
    \end{gather}
Next, the proof of (ii) follows immediately from Corollary \ref{prop:EffOpMonotConvavity}, since if $0<z_i\leq w_i$ for $i=1,\ldots, n$, then $L(z),L(w)\in \mathcal{L}(\mathcal{H})_{++}$ and
\begin{gather}
    L(z)=\sum_{i=1}^nz_i\Gamma_i\leq \sum_{i=1}^nw_i\Gamma_i=L(w). 
\end{gather}
\end{proof}

Proposition \ref{prop:EffOpBessRealization} gives an indication that there is a deeper relationship between Bessmertny\u{\i} realizations and effective operators. Thus, a natural question to ask is whether the converse of Proposition \ref{prop:EffOpBessRealization} is true. The next theorem says that it is, up to a unitary transformation. 

To begin, the following lemma is needed.
\begin{lemma}\label{lem:TUnitaryTransRestrV}
    If $\mathcal{V}, \mathcal{W}$ are Hilbert spaces, $V\in \mathcal{L}(\mathcal{V}, \mathcal{W})$, and $V^*V=I$, then the following statements are true:
    \begin{itemize}
         \item[(i)] $VV^*$ is the orthogonal projection of $\mathcal{W}$ onto $V\mathcal{V}$ (the range of $V$).
         \item[(ii)]\; The restrictions of the operators $V$ and $V^*$, defined by
        \begin{gather}
            T:\mathcal{V}\rightarrow V\mathcal{V}, Tv=Vv,\; v\in \mathcal{V},\label{DefTUnitaryTransRestrV}\\ S:V\mathcal{V}\rightarrow \mathcal{V},\; Sy=V^*y, \; y\in V\mathcal{V},
        \end{gather}
        respectively, are unitary transformations that are inverses of each other, i.e.,
        \begin{gather}
            T\in \mathcal{L}(\mathcal{V},V\mathcal{V}),\;S\in \mathcal{L}(V\mathcal{V},\mathcal{V}), \; T^{-1}=T^*=S.
        \end{gather}
    \end{itemize}
\end{lemma}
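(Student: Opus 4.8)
The plan is to base everything on the single hypothesis $V^*V = I$; the statement is a compact package of standard Hilbert-space operator algebra, so I expect no deep obstacle, only one place (the adjoint computation in (ii)) where one must be careful about which space each operator maps into.

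First I would dispatch (i). Set $P = VV^*$. Two algebraic checks show $P$ is an orthogonal projection: $P^* = (VV^*)^* = VV^* = P$, and $P^2 = V(V^*V)V^* = VIV^* = VV^* = P$, the latter using the hypothesis. To identify its range with $V\mathcal{V}$, note on one hand that $P\mathcal{W} = VV^*\mathcal{W} \subseteq V\mathcal{V}$, and on the other that $P(Vv) = V(V^*V)v = Vv$ for every $v \in \mathcal{V}$, so $P$ fixes $V\mathcal{V}$ pointwise and hence $V\mathcal{V} \subseteq P\mathcal{W}$; together these give $P\mathcal{W} = V\mathcal{V}$. Since the range of a bounded projection is closed, this incidentally shows $V\mathcal{V}$ is a closed subspace, which is exactly what makes it legitimate to treat it as a Hilbert space and to speak of unitary maps onto it in (ii).

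For (ii), I would first record that $V$ is an isometry, since $\|Vv\|^2 = \langle v, V^*Vv\rangle = \|v\|^2$; thus $T$ is a bounded linear bijection of $\mathcal{V}$ onto $V\mathcal{V}$ (surjective because $V\mathcal{V}$ is by definition the range of $V$, injective because an isometry has trivial kernel). The two composition identities follow straight from the algebra: $STv = V^*Vv = v$ gives $ST = I_{\mathcal{V}}$, while $TSy = VV^*y = Py = y$ for $y \in V\mathcal{V}$ uses part (i), namely that $P$ acts as the identity on $V\mathcal{V}$; hence $S = T^{-1}$.

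Finally, to upgrade ``bijection'' to ``unitary'' I would compute the adjoint of $T$ regarded as a map $\mathcal{V} \to V\mathcal{V}$, with $V\mathcal{V}$ carrying the inner product inherited from $\mathcal{W}$. For $v \in \mathcal{V}$ and $y \in V\mathcal{V}$ one has $\langle Tv, y\rangle_{\mathcal{W}} = \langle Vv, y\rangle_{\mathcal{W}} = \langle v, V^*y\rangle_{\mathcal{V}} = \langle v, Sy\rangle_{\mathcal{V}}$, so by uniqueness of the adjoint $T^* = S = T^{-1}$, i.e.\ $T$ is unitary; then $S = T^{-1}$ is unitary as well, with $S^* = T$. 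The only point demanding attention is keeping track of the codomains---in particular that $T^*$ lands in $\mathcal{V}$ and coincides with the corestriction $S$ rather than with the ambient operator $V^*$ on all of $\mathcal{W}$---but this is bookkeeping rather than a genuine difficulty.
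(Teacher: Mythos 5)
Your proof is correct and follows essentially the same route as the paper's: verify $P=VV^*$ is a self-adjoint idempotent with range $V\mathcal{V}$, check $ST=I_{\mathcal{V}}$ and $TS=I_{V\mathcal{V}}$ from $V^*V=I$ and part (i), and identify $T^*=S$ via the inner-product computation. The added observations (that $V\mathcal{V}$ is closed as the range of a bounded projection, and that $V$ is an isometry) are harmless refinements of the same argument.
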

\begin{proof}
    (i): As $V^*\in \mathcal{L}(\mathcal{W},\mathcal{V})$ is the Hilbert space adjoint of $V\in \mathcal{L}(\mathcal{V},\mathcal{W})$ and $V^*V=I$ (the identity operator on $\mathcal{V}$), the operator $P:=VV^*\in \mathcal{L}(\mathcal{W})$ is an orthogonal projection of $\mathcal{W}$ onto $P\mathcal{W}$, since $P^2=VV^*VV^*=VV^*=P$ and $P^*=(VV^*)^*=(V^*)^*V^*=VV^*=P$. Next, clearly $P\mathcal{W}\subseteq V\mathcal{V}=VV^*V\mathcal{V}\subseteq P\mathcal{W}$, hence $P\mathcal{W}=V\mathcal{V}$. This proves (i).

    (ii): Clearly, $S,T$ are well-defined functions such that $T\in \mathcal{L}(\mathcal{V},V\mathcal{V})$ and $S\in \mathcal{L}(V\mathcal{V},\mathcal{V})$. Next, as $V^*V=I$ and, by part (i), $VV^*$ is the orthogonal projection of $\mathcal{W}$ onto $V\mathcal{V}$, one has
    \begin{gather}
        STv=V^*Vv=v,\; \forall v\in \mathcal{V},\\
        TSy=(VV^*)y=y,\;\forall y\in V\mathcal{V}
    \end{gather}
    which proves that $T$ is invertible with $T^{-1}=S$. Finally, denoting the inner product of a Hilbert space $H$ by $\langle \cdot,\cdot\rangle_{H}$,
    \begin{gather}
        \langle Tv, y\rangle_{V\mathcal{V}}=\langle Vv, y\rangle_{\mathcal{W}}=\langle v, V^*y\rangle_{\mathcal{V}}=\langle v, Sy\rangle_{\mathcal{V}},\;\text{ for all }v\in \mathcal{V}, y\in V\mathcal{V},
    \end{gather}
    implying $T^*=S$, which proves (ii).
\end{proof}

\begin{theorem}\label{thm:BessNHPSDPCharViaEffOp}
    If $f(z)$ is the Schur complement of a normalized homogeneous positive semidefinite operator pencil, i.e., for all $z\in D$,
    \begin{gather}
        f(z)=A(z)/A_{11}(z)=A_{00}(z)-A_{01}(z)A_{11}(z)^{-1}A_{10}(z),\label{ABlockFormBessNHPSDPCharViaEffOp}
    \end{gather}
    where
    \begin{gather}
    A(z)=\begin{bmatrix}
  A_{00}(z) & A_{01}(z) \\
 A_{10}(z) & A_{11}(z)
\end{bmatrix},\\
A(z)= z_1A_1+\cdots +z_n A_n,\\
A_i^*=A_i\geq 0,\; (i=1,\ldots, n),\\
A_1+\cdots+A_n=I,\label{NormalizationCond}
\end{gather}
then there exists an effective operator $L_*(z)$ of an abstract $n$-phase composite with orthogonal Z(n)-subspace
collection and a constant unitary transformation $T$ such that
\begin{gather}
    f(z)=T^*L_*(z)T\qquad \forall z\in D.
\end{gather}
In particular, $f(\cdot)$ and $L_*(\cdot)$ are unitarily equivalent on the domain $D$.
\end{theorem}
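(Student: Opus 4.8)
The plan is to read the statement as a dilation result. The hypotheses say that $A_1,\ldots,A_n$ are self-adjoint positive semidefinite operators on the space $\mathcal{M}=\mathcal{M}_0\overset{\bot}{\oplus}\mathcal{M}_1$ carrying the pencil (with $A_{00}$ acting on $\mathcal{M}_0$ and $A_{11}$ on $\mathcal{M}_1$, so that $f(z)\in\mathcal{L}(\mathcal{M}_0)$) that form a discrete resolution of the identity by positive operators, $A_1+\cdots+A_n=I$. To exhibit $f$ as a unitary copy of an effective operator I must dilate this positive resolution to a genuine orthogonal one: produce a larger Hilbert space $\mathcal{H}$ carrying mutually orthogonal projections $\Lambda_1,\ldots,\Lambda_n$ with $\sum_i\Lambda_i=I_{\mathcal{H}}$ whose compressions back to $\mathcal{M}$ return the $A_i$. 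This is exactly a finite Naimark-type dilation.

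First I would build the dilation explicitly. Set $\mathcal{H}=\mathcal{M}\overset{\bot}{\oplus}\cdots\overset{\bot}{\oplus}\mathcal{M}$ ($n$ copies), let $\Lambda_i$ be the orthogonal projection onto the $i$-th copy, and define $V\in\mathcal{L}(\mathcal{M},\mathcal{H})$ by $Vm=(A_1^{1/2}m,\ldots,A_n^{1/2}m)$ via the positive square roots $A_i^{1/2}$. The normalization $\sum_iA_i=I$ gives $V^*V=\sum_iA_i=I$, so $V$ is an isometry; and from the adjoint formula $V^*(h_1,\ldots,h_n)=\sum_iA_i^{1/2}h_i$ one computes $V^*\Lambda_iV=A_i$ for every $i$. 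The $\Lambda_i$ are visibly mutually orthogonal projections summing to $I_{\mathcal{H}}$.

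Next I would assemble the subspace collection and the unitary. Put $\mathcal{P}_i=\Lambda_i\mathcal{H}$ and define $\mathcal{U}=V\mathcal{M}_0$, $\mathcal{E}=V\mathcal{M}_1$, $\mathcal{J}=(V\mathcal{M})^{\bot}$. Since $V$ is an isometry these three subspaces are mutually orthogonal and $\mathcal{H}=\mathcal{U}\overset{\bot}{\oplus}\mathcal{E}\overset{\bot}{\oplus}\mathcal{J}$, so $\mathfrak{C}=(\mathcal{H},\mathcal{U},\mathcal{E},\mathcal{J},\mathcal{P}_1,\ldots,\mathcal{P}_n)$ is an orthogonal $Z(n)$-subspace collection in the sense of Definition \ref{def:SubspCollMultiphasComposite}, with pencil $L(z)=\sum_iz_i\Lambda_i$. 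By Lemma \ref{lem:TUnitaryTransRestrV} the corestriction $T\colon\mathcal{M}\to V\mathcal{M}$, $Tm=Vm$, is unitary with $T^*=V^*|_{V\mathcal{M}}$ and $VV^*$ equal to the orthogonal projection onto $V\mathcal{M}=\mathcal{U}\overset{\bot}{\oplus}\mathcal{E}$; crucially $T$ is block-diagonal for the splittings $\mathcal{M}_0\oplus\mathcal{M}_1$ and $\mathcal{U}\oplus\mathcal{E}$, since $T\mathcal{M}_0=\mathcal{U}$ and $T\mathcal{M}_1=\mathcal{E}$ by construction.

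Finally I would match the Schur complements. The compression of $L(z)$ to $\mathcal{U}\overset{\bot}{\oplus}\mathcal{E}=V\mathcal{M}$ is $\sum_iz_iC_i$, where $C_i$ is the compression of $\Lambda_i$; combining $V^*\Lambda_iV=A_i$ with $VV^*=P_{V\mathcal{M}}$ and $V^*V=I$ yields $T^*C_iT=A_i$, hence $T^*[\text{compression of }L(z)]T=A(z)$ block by block. Because $T$ is block-diagonal and unitary, conjugation by it preserves the Schur complement relative to the $\mathcal{E}$-block; therefore $f(z)=A(z)/A_{11}(z)$ equals $T_0^*L_*(z)T_0$, where $T_0=T|_{\mathcal{M}_0}\colon\mathcal{M}_0\to\mathcal{U}$ and $L_*(z)$ is precisely the effective operator of the composite $\mathfrak{C}$ supplied by Proposition \ref{prop:EffOpBessRealization} (which also guarantees, via hypothesis \hyperlink{(LM)}{(LM)}, that the relevant $\mathcal{E}$-block is invertible for every $z\in D$). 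Taking $T=T_0$ finishes the argument. The main obstacle is conceptual — recognizing that the converse reduces to dilating the positive resolution of the identity $\{A_i\}$ to a projection-valued one — and the one computational subtlety is to choose the dilating isometry $V$ so that its corestriction $T$ respects the $\mathcal{M}_0/\mathcal{M}_1$ block splitting, since the Schur complement is preserved only under block-diagonal, not arbitrary, unitary equivalence.
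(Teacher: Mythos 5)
Your proposal is correct and follows essentially the same route as the paper: a Naimark-type dilation via the isometry $V=(A_1^{1/2},\ldots,A_n^{1/2})^{T}$ into the direct sum $\mathcal{H}=\bigoplus_i\mathcal{V}_i$ (the paper allows general factorizations $A_i=V_i^*V_i$ but names your square-root choice as the canonical one), followed by Lemma~\ref{lem:TUnitaryTransRestrV} to corestrict $V$ to the block-diagonal unitary $T$ with $\mathcal{U}=V\mathcal{M}_0$, $\mathcal{E}=V\mathcal{M}_1$, $\mathcal{J}=(V\mathcal{M})^{\perp}$. The only cosmetic difference is the last step, where you invoke invariance of the Schur complement under block-diagonal unitary conjugation while the paper verifies the identity $Tw_0=L_*(z)Tx_0$ directly from the $Z$-problem equation; both are valid.
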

\begin{proof}
By hypotheses, there exists a Hilbert space with orthogonal decomposition $\mathcal{H}_0\overset{\bot}{\mathcal{\oplus}}\mathcal{H}_1$ such that $A(z)=[A_{ij}(z)]_{i,j=0,1}$ is a $2\times 2$ block operator matrix \eqref{ABlockFormBessNHPSDPCharViaEffOp} with respect to that decomposition. For each $i$, there exists a Hilbert space $\mathcal{V}_i$ and a bounded linear operator $V_i\in\mathcal{L}(\mathcal{H}_0\overset{\bot}{\mathcal{\oplus}}\mathcal{H}_1,\mathcal{V}_i)$ such that
\begin{gather}
    A_i=V_i^*V_i,\; i=1,\ldots, n.
\end{gather}
(For example, $\mathcal{V}_i=\mathcal{H}_0\overset{\bot}{\mathcal{\oplus}}\mathcal{H}_1$ and $V_i=A_{i}^{1/2}$ is the positive semidefinite operator square root of $A_i$, $i=1,\ldots, n$). Next, recall that the (external) direct sum of the Hilbert spaces $\mathcal{V}_1,\ldots, \mathcal{V}_n$, i.e.,
\begin{gather}
        \mathcal{H}=\bigoplus_{i=1}^n\mathcal{V}_i=\mathcal{V}_1\oplus\cdots \oplus \mathcal{V}_n=\left\{\begin{bmatrix}
            v_1\\
            \vdots\\
            v_n
        \end{bmatrix}:v_i\in\mathcal{V}_i,i=1,\ldots, n\right\},
    \end{gather}
     is a Hilbert space 
     when equipped with the usual componentwise addition and scalar multiplication and inner product $\langle \cdot, \cdot \rangle$ defined by
     \begin{gather}
         \left\langle v,w \right\rangle = \sum_{i=1}^n\langle v_i,w_i \rangle_{\mathcal{V}_i}, \; \text{ for all }v,w\in \mathcal{H},
     \end{gather}
     where $\langle \cdot, \cdot\rangle_{\mathcal{V}_i}$ denotes the inner product of $\mathcal{V}_i$, 
     $i=1,\ldots, n$. In particular, it has the orthogonal decomposition
     \begin{gather}
         \mathcal{H}=\mathcal{P}_1\overset{\bot }{\oplus }\cdots\overset{\bot }{\oplus }\mathcal{P}_n,\\
         \mathcal{P}_i= \left\{v\in \mathcal{H}:v_j=0, j\not=i\right\},\; i=1,\ldots,n.
     \end{gather}
     Let $\Lambda_i$ denote the orthogonal projection of $\mathcal{H}$ onto $\mathcal{P}_i$, $i=1,\ldots, n$ and define
     \begin{gather}
         L(z)=z_1\Lambda_1+\cdots+z_n\Lambda_n,\;z=(z_1,\ldots, z_n)\in \mathbb{C}^n.
     \end{gather}
     Next, define
     \begin{gather}
         V\in \mathcal{L}(\mathcal{H}_0\overset{\bot}{\mathcal{\oplus}}\mathcal{H}_1,\mathcal{{H}}),\; Vx=\begin{bmatrix}
             V_1x\\
             \vdots\\
             V_nx
         \end{bmatrix},\; x\in \mathcal{H}_0\overset{\bot}{\mathcal{\oplus}}\mathcal{H}_1.
     \end{gather}
     Then the Hilbert space adjoint of $V$ is
     \begin{gather}
         V^*\in \mathcal{L}(\mathcal{{H}},\mathcal{H}_0\overset{\bot}{\mathcal{\oplus}}\mathcal{H}_1),\; V^*v=\sum_{i=1}^nV_i^*v_i,\; v\in \mathcal{H},
     \end{gather}
     since, for every $x\in \mathcal{H}_0\overset{\bot}{\mathcal{\oplus}}\mathcal{H}_1$ and $v\in \mathcal{H}$,
     \begin{gather}
        \langle Vx,v\rangle=\sum_{i=1}^n\langle V_ix,v_i \rangle_{\mathcal{V}_i}=\sum_{i=1}^n\langle x,V^*v_i \rangle_{\mathcal{H}_0\overset{\bot}{\mathcal{\oplus}}\mathcal{H}_1}=\left\langle x,\sum_{i=1}^nV_i^*v_i \right\rangle_{\mathcal{H}_0\overset{\bot}{\mathcal{\oplus}}\mathcal{H}_1}.
     \end{gather}
     It follows that
\begin{gather}
    A_i=V^*\Lambda_iV,\;i=1,\ldots, n,
\end{gather}
since
\begin{gather}
    V^*\Lambda_iVx=V^*\Lambda_i\begin{bmatrix}
             V_1x\\
             \vdots\\
             V_nx
         \end{bmatrix}=V_i^*V_ix=A_ix, \forall x\in \mathcal{H}_0\overset{\bot}{\mathcal{\oplus}}\mathcal{H}_1
\end{gather}
for each $i=1,\ldots, n$. In particular, this proves that
\begin{gather}
    A(z)=V^*L(z)V\qquad\forall z\in\mathbb{C}^n.
\end{gather}
Now $A(1,\ldots, 1)=\sum_{i=1}^nA_i=I$ (the identity operator on $\mathcal{H}_0\overset{\bot}{\mathcal{\oplus}}\mathcal{H}_1$) by assumption \eqref{NormalizationCond} and $L(1,\ldots,1)=\sum_{i=1}^n\Lambda_i=I$ (the identity operator on $\mathcal{H}$) by construction, hence
\begin{gather}
    V^*V=I.
\end{gather}
From this and Lemma \ref{lem:TUnitaryTransRestrV}, one concludes that one has an orthogonal triple decomposition of $\mathcal{H}$ with 
	\begin{gather}
		\mathcal{H=U}\overset{\bot}{\mathcal{\oplus}}\mathcal{E}\overset{\bot
			}{\mathcal{\oplus}}\mathcal{J},\\
         \mathcal{U}=V\mathcal{H}_0,\; \mathcal{E}=V\mathcal{H}_1,\;\mathcal{J}= \mathcal{H}\overset{\bot}{\mathcal{\ominus}}(\mathcal{U}\overset{\bot}{\mathcal{\oplus}}\mathcal{E}). 
	\end{gather}
By Lemma \ref{lem:TUnitaryTransRestrV}, it follows that the restrictions of the operators $V$ and $V^*$, defined by
        \begin{gather}
            T:\mathcal{H}_0\rightarrow V\mathcal{H}_0, Tv=Vv,\; v\in \mathcal{H}_0,\\ S:V\mathcal{H}_0\rightarrow \mathcal{H}_0,\; Sy=V^*y, \; y\in V\mathcal{H}_0,
        \end{gather}
        respectively, are unitary transformations that are inverses of each other, i.e.,
        \begin{gather}
            T\in \mathcal{L}(\mathcal{H}_0,V\mathcal{H}_0),\;S\in \mathcal{L}(V\mathcal{H}_0,\mathcal{H}_0), \; T^{-1}=T^*=S.
        \end{gather}
It has also been proven that the $Z$-problem
\begin{gather}
    (\mathcal{H},\mathcal{U},\mathcal{E},\mathcal{J},L(z))
\end{gather}
is an abstract $n$-phase composite with orthogonal $Z(n)$-subspace collection
\begin{gather}
    \mathfrak{C} = (\mathcal{H},\mathcal{U},\mathcal{E},\mathcal{J},\mathcal{P}_1,\ldots,\mathcal{P}_n).
\end{gather}
Consider the effective operator $L_*(z)$ of $\mathfrak{C}$. It is claimed that
\begin{gather}
    A(z)/A_{11}(z)=T^*L_*(z)T\;\quad\forall z\in D.
\end{gather}
Let $z\in D$. Then by Proposition \ref{prop:EffOpBessRealization}, the effective operator $L_*(z)$ is well-defined and by the assumptions on $A(z)$ the block $A_{11}(z)$ is invertible so that the Schur complement $f(z)=A(z)/A_{11}(z)$ is well-defined. Next, let $x_0\in \mathcal{H}_0$. Then $w_0=A(z)/A_{11}(z)x_0\in \mathcal{H}_0$ and $x_1=-A_{11}(z)^{-1}A_{10}(z)x_0\in \mathcal{H}_1$ satisfy
\begin{gather}
    w_0=A(z)(x_0+x_1).
\end{gather}
Hence,
\begin{gather}
    Tw_0+0=Vw_0=VA(z)V^*(Vx_0+Vx_1)=L(z)(Tx_0+Vx_1),
\end{gather}
and since $Tx_0\in \mathcal{U}$ with $(Tw_0,Vx_1,0)\in \mathcal{U}\times\mathcal{E}\times\mathcal{J}$, it follows that
\begin{gather}
    Tw_0=L_*(z)Tx_0,
\end{gather}
implying
\begin{gather}
    A(z)/A_{11}(z)x_0=w_0=T^*L_*(z)Tx_0.
\end{gather}
This proves the claim, which completes the proof of the theorem.
\end{proof}

\section{Effective conductivity in a periodic medium}\label{sec:EffCondPeriodicMedium}

Consider the Hilbert space of periodic square-integrable vector-valued functions $\left[  L_{\#}^{2}\left(\Omega\right)\right]^{d}$ ($d=2$ or $d=3$, over the field $\mathbb{K}=\mathbb{R}$ or $\mathbb{K}=\mathbb{C}$) with unit cell 
$\Omega\subseteq \mathbb{R}^n$ [e.g., $\Omega = (0,2\pi)^d$] and the Hodge decomposition 
\begin{gather}
    \left[  L_{\#}^{2}\left(\Omega\right)\right]^{d}=\mathcal{H}=\mathcal{U}\ho\mathcal{E}\ho\mathcal{J},\label{HodgeDecompPeriodContinuumCond}\\
	\mathcal{U}  =\{U\in\mathcal{H}:\langle U \rangle=U\}, \label{HodgeDecompPeriodContinuumCond1}\\
	\mathcal{E}  =\{E\in\mathcal{H}:\nabla\times E=0,\; \langle E\rangle=0\}, \label{HodgeDecompPeriodContinuumCond2}\\
	\mathcal{J}  =\{J\in\mathcal{H}:\nabla\cdot J=0,\; \langle J\rangle=0\},\label{HodgeDecompPeriodContinuumCond3}
\end{gather}
with the inner product and (unit cell) average
\begin{align}
	\left(E,F\right)_{\mathcal{H}}=\frac{1}{\left\vert \Omega\right\vert }%
	{\textstyle\int\limits_{\Omega}}\overline{E\left(x\right)}^{T}F(x)dx,\;\; 
	\left\langle F\right\rangle                 
	=\frac{1}{\left\vert \Omega\right\vert }%
	{\textstyle\int\limits_{\Omega}}            
	F\left(  x\right)  dx,    \label{ExContinuumPeriodicCondZProbRealInnerProdAndAvg}          
\end{align}
respectively, for all $E,F\in\mathcal{H}$. Here, $(\cdot)^T$ and $\overline{(\cdot)}$ denote the transpose and complex conjugation, respectively, and $|\Omega|=\int_{\Omega}dx$ is the Lebesgue measure of $\Omega$. 
In particular, $\mathcal{U}$ is the $d$-dimensional space of uniform (constant) vector functions; $\mathcal{E}$ is the infinite-dimensional space of all $\Omega$-periodic fields $E$ characterized by $E=\nabla u$ for some $\Omega$-periodic function $u$; $\mathcal{U}\ho \mathcal{E}$ and $\mathcal{U}\ho \mathcal{J}$ are the spaces of periodic vector fields which are the gradient $\nabla$ of a potential and divergence-free, respectively. 

The $Z$-problem $(\mathcal{H},\mathcal{U},\mathcal{E},\mathcal{J},L)$ for this example is the problem of determining, for a given $E_0\in \mathcal{U}$, the triplet $J_0\in \mathcal{U}$, $E\in \mathcal{E}$, $J\in \mathcal{J}$ such that the following Ohm's law with conductivity $L$ holds:
\begin{align}
    J_0+J=L (E_0+E).\label{DefZProbContConductivity}
\end{align}
Effective conductivity (operator) $L_*\in \mathcal{L}(\mathcal{U})$ is just an effective operator for this $Z$-problem, i.e., it satisfies
\begin{align}
    J_0=L_*E_0,\label{DefZProbContEffConductivity}
\end{align}
whenever $E_0\in \mathcal{U}$ and $(J_0,E,J)\in \mathcal{U}\times \mathcal{E}\times \mathcal{J}$ solves Ohm's law \eqref{DefZProbContConductivity}. Equivalently, for the periodic electric field $E_0+E\in \mathcal{U}\ho \mathcal{E}$ and the periodic current density $J_0+J\in \mathcal{U}\ho \mathcal{J}$ related through Ohm's law (\ref{DefZProbContConductivity}), the effective conductivity $L_*$ relates the averages of these fields:
\begin{align}
    J_0=\langle J_0+J \rangle = \langle L(E_0+E) \rangle = L_* \langle E_0+E \rangle=L_*E_0.
\end{align}

Consider now the subspace $L_{\#}^{\infty}\left(\Omega\right)\subseteq L_{\#}^{2}\left(\Omega\right)$ of essentially bounded (periodic) functions and denote the set of all $d\times d$ matrices with entries in $L_{\#}^{\infty}\left(\Omega\right)$ by $[L_{\#}^{\infty}\left(\Omega\right)]^{d\times d}$. Then $L_{\#}^{\infty}\left(\Omega\right)$ and $[L_{\#}^{\infty}\left(\Omega\right)]^{d\times d}$ can be treated as subspaces of bounded linear operators in $\mathcal{L}(\mathcal{H})$ when viewed as left-multiplication operators, i.e.,
\begin{gather}
    \sigma\in L_{\#}^{\infty}\left(\Omega\right)\cup [L_{\#}^{\infty}\left(\Omega\right)]^{d\times d}\Rightarrow L_{\sigma}\in \mathcal{L}(\mathcal{H}), \text{ where } L_{\sigma}(F)=\sigma F,\; F\in \mathcal{H}.
\end{gather}
Similarly, by convention $\mathbb{K}\subseteq L_{\#}^{\infty}\left(\Omega\right)$ is the subspace of constant functions so that $\mathcal{U}=\mathbb{K}^d$ and hence the elements of $\mathcal{L}(\mathcal{U})$ are just operators of left multiplication by matrices in $\mathbb{K}^{d\times d}$, i.e., 
\begin{gather}
    \mathcal{L}(\mathcal{U})= \{L_{\sigma}:\sigma\in \mathbb{K}^{d\times d}\}.
\end{gather}
Moreover, as the map $\mathbb{K}^{d\times d}\ni\sigma\mapsto L_{\sigma}\in \mathcal{L}(\mathcal{U})$ is a bijection, for any effective (conductivity) operator $L_*$ of a $Z$-problem  $(\mathcal{H},\mathcal{U},\mathcal{E},\mathcal{J},L)$ there exists a unique $\sigma_*\in \mathbb{K}^{d\times d}$ such that $L_*=L_{\sigma_*}$. The matrix $\sigma_*$ is called the effective conductivity tensor for this $Z$-problem. 

This terminology is motivated by noting that if $\sigma\in L_{\#}^{\infty}\left(\Omega\right)\cup [L_{\#}^{\infty}\left(\Omega\right)]^{d\times d}$ (local conductivity tensors) then with $L=L_{\sigma}$, the effective conductivity $(L_{\sigma})_*=L_{\sigma_*}$ can be identified with the effective conductivity tensor $\sigma_*\in \mathbb{K}^{d\times d}$. In addition, the orthogonal projection of $\mathcal{H}$ onto $\mathcal{U}$, which has been denoted by $\Gamma_0$, is the averaging operator, i.e.,
\begin{gather}
\Gamma_0\in\mathcal{L}(\mathcal{H}),\; \Gamma_0F=\langle F\rangle,\; F\in \mathcal{H},\\  
\Gamma_0^2=\Gamma_0,\; \Gamma_0^*=\Gamma_0,\; \Gamma_0\mathcal{H}=\mathcal{U}.
\end{gather}

Next, we will consider the effective conductivity of an isotropic $n$-phase composite. Suppose that $\Omega$ is decomposed into a disjoint union
\begin{gather}
    \Omega = \Omega_1\cup \cdots \cup \Omega_n,\\
    \Omega_i\cap \Omega_j=\emptyset\;\; (i\not=j)
\end{gather}
for some Lebesgue measurable sets $\Omega_i\subseteq \Omega$ with $|\Omega_i|>0$ and consider the characteristic function
\begin{gather}
    \Omega\ni x\mapsto\chi_i(x)= \left\{\begin{array}{ll}
        1 & \text{if } x\in \Omega_i,\\
        0 & \text{if } x\in \Omega\setminus \Omega_i,
    \end{array}\right.
\end{gather}
which is extended by periodicity to $\mathbb{R}^d$ so that
\begin{gather}
    \chi_i\in L_{\#}^{\infty}\left(\Omega\right),\;\;i=1,\ldots, n.
\end{gather}
Next, define $\mathcal{P}_i$ to be the range of the multiplication operator $L_{\chi_i}$, i.e.,
\begin{gather}
    \mathcal{P}_i=L_{\chi_i}\mathcal{H}=\{\chi_iF:F\in \mathcal{H}\},\;i=1,\ldots, n.\label{CharFuncsDefSubspaces}
\end{gather}
Then an orthogonal $Z(n)$-subspace collection $\mathfrak{C}$ given by
\begin{gather}
    \mathfrak{C} = (\mathcal{H},\mathcal{U},\mathcal{E},\mathcal{J},\mathcal{P}_1,\ldots,\mathcal{P}_n),
\end{gather}
where $\mathcal{H}$ is the Hilbert space \eqref{HodgeDecompPeriodContinuumCond} with subspaces $\mathcal{U},\mathcal{E},\mathcal{J}$ defined by \eqref{HodgeDecompPeriodContinuumCond1}-\eqref{HodgeDecompPeriodContinuumCond3} and subspaces $\mathcal{P}_1,\ldots,\mathcal{P}_n$ defined by \eqref{CharFuncsDefSubspaces}, satisfies
\begin{gather}
    \mathcal{H}=\mathcal{U}\overset{\bot }{\oplus }\mathcal{E}\overset{\bot }{\oplus }\mathcal{J}=\mathcal{P}_1\overset{\bot }{\oplus }\cdots\overset{\bot }{\oplus }\mathcal{P}_n.
\end{gather}
Letting $\Lambda_i$ denote the orthogonal projection of $\mathcal{H}$ onto $\mathcal{P}_i$, one has
\begin{gather}
    \Lambda_i=L_{\chi_i},\;\Lambda_iF=\chi_iF,\; F\in \mathcal{H}, \qquad i=1,\ldots, n.
\end{gather}
In this case, the $Z$-problem 
\begin{gather}
    (\mathcal{H},\mathcal{U},\mathcal{E},\mathcal{J}, L_{\sigma(z)}),
\end{gather}
where
\begin{gather}
    \sigma(z)=z_1\chi_1+\ldots+z_n \chi_n,\\
    L_{\sigma(z)}=z_1\Lambda_1+\ldots+z_n \Lambda_n,\; z=(z_1,\ldots, z_n)\in\mathbb{C}^n
\end{gather}
is an $n$-phase composite with orthogonal $Z(n)$-subspace collection $\mathfrak{C}$ (cf.\ Def.\ \ref{def:SubspCollMultiphasComposite}). For the effective conductivity of this $Z$-problem, i.e., $(L_{\sigma(z)})_*=L_{\sigma_*(z)}$, where $\sigma_*(z):=\sigma(z)_*$ is the effective conductivity tensor, they are well-defined functions for all $z\in D$ by Proposition \ref{prop:EffOpBessRealization}.

The following corollary is the Keller-Dykhne-Mendelson duality for isotropic $n$-phase composites for conductivity, see \cite[Chap.\ 3]{Milton:2022:TOC} and references therein.
\begin{corollary}
    If $d=2$ and $R\in \mathbb{K}^{2\times 2}$ is the matrix of clockwise rotation by $90$ degrees, i.e.,
    \begin{gather}
        R=\begin{bmatrix}
            0 & 1\\
            -1 & 0
        \end{bmatrix},
    \end{gather} 
    then the following statements are true:
    \begin{itemize}
        \item[(i)] The orthogonal $Z(n)$-subspace collection  $\mathfrak{C} = (\mathcal{H},\mathcal{U},\mathcal{E},\mathcal{J},\mathcal{P}_1,\ldots,\mathcal{P}_n)$ has the properties
        \begin{gather}
            R\mathcal{U}=\mathcal{U}, \;R\mathcal{E}=\mathcal{J}, R\!\mathcal{J}=\mathcal{E},\;R\mathcal{P}_i=\mathcal{P}_i,\; i=1,\ldots, n.
        \end{gather}
        \item[(ii)]\;The effective conductivity tensor $\sigma_*(z)$ satisfies
    \begin{gather}
        \sigma_*(z)=R[\sigma_*(z^{-1})]^{-1}R^{-1}\qquad \forall z\in D.
    \end{gather} 
    \end{itemize}
\end{corollary}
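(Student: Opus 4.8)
The plan is to recognize that the rotation matrix $R$ acts on $\mathcal{H}=[L^2_\#(\Omega)]^2$ as the left-multiplication operator $L_R$, and that once the operator-theoretic and subspace-mapping properties of $L_R$ are in hand, both parts follow by invoking the abstract duality results already proved, namely Corollary~\ref{cor:AbstractKellerDykhneMendelsonDuality} and Corollary~\ref{cor:AbstractMultiPhaseKellerDykhneMendelsonDuality}. Thus essentially all the work is in verifying, for the operator $R=L_R$, the four hypotheses $R^*=R^{-1}=-R$, $R\mathcal{U}\subseteq\mathcal{U}$, $R\mathcal{E}\subseteq\mathcal{J}$, $R\mathcal{J}\subseteq\mathcal{E}$, together with $R\mathcal{P}_i\subseteq\mathcal{P}_i$.

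First I would dispatch the operator identities. Since $R=\begin{bmatrix}0&1\\-1&0\end{bmatrix}$ satisfies $R^2=-I$ and (being real) $R^T=-R$, and since the adjoint of a constant-matrix multiplication operator on $\mathcal{H}$ is multiplication by the conjugate transpose, one gets $L_R^*=L_{R^T}=L_{-R}=-L_R$ and $L_R^{-1}=L_{R^{-1}}=-L_R$; hence $R^*=R^{-1}=-R$ as operators in $\mathcal{L}(\mathcal{H})$. For the subspace inclusions, $R\mathcal{U}\subseteq\mathcal{U}$ is immediate because $R$ sends constant vector fields to constant vector fields. The crucial inclusions are $R\mathcal{E}\subseteq\mathcal{J}$ and $R\mathcal{J}\subseteq\mathcal{E}$: for $E\in\mathcal{E}$ a direct computation gives $\nabla\cdot(RE)=\partial_1 E_2-\partial_2 E_1=\nabla\times E=0$, and similarly $\nabla\times(RJ)=-(\partial_1 J_1+\partial_2 J_2)=-\nabla\cdot J=0$ for $J\in\mathcal{J}$, while the zero-average condition is preserved since $\langle RF\rangle=R\langle F\rangle$; this is exactly the planar fact that a $90^\circ$ rotation interchanges curl-free and divergence-free fields. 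With these verified, the first part of Corollary~\ref{cor:AbstractKellerDykhneMendelsonDuality} upgrades the inclusions to the equalities $R\mathcal{U}=\mathcal{U}$, $R\mathcal{E}=\mathcal{J}$, $R\mathcal{J}=\mathcal{E}$. Finally, for the phases I would note that as multiplication operators the constant matrix $R$ and the scalar $\chi_i$ commute, so $R\Lambda_i=\Lambda_iR$ and hence $R\mathcal{P}_i=R\Lambda_i\mathcal{H}=\Lambda_i R\mathcal{H}=\Lambda_i\mathcal{H}=\mathcal{P}_i$, using that $R$ is onto $\mathcal{H}$; this proves (i).

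For (ii), the hypotheses of Corollary~\ref{cor:AbstractMultiPhaseKellerDykhneMendelsonDuality} are now precisely the properties established in (i), so that corollary yields the operator identity $L_*(z)=R[L_*(z^{-1})]^{-1}R^{-1}$ on $\mathcal{U}$ for all $z\in D$. It then remains only to translate this into matrices: on $\mathcal{U}=\mathbb{K}^2$ one has $L_*(z)=L_{\sigma_*(z)}$ and $R=L_R$, and the map $\sigma\mapsto L_\sigma$ is a bijective algebra homomorphism $\mathbb{K}^{2\times2}\to\mathcal{L}(\mathcal{U})$ respecting products and inverses, so the operator identity transfers verbatim to $\sigma_*(z)=R[\sigma_*(z^{-1})]^{-1}R^{-1}$. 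I expect the only genuinely substantive step to be the planar curl/divergence computation showing $R\mathcal{E}\subseteq\mathcal{J}$ and $R\mathcal{J}\subseteq\mathcal{E}$ (the one place where $d=2$ is essential); everything else is either a matrix identity or a direct application of the abstract machinery.
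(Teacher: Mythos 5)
Your proposal is correct and follows essentially the same route as the paper: verify that $L_R$ is a unitary with $L_R^*=L_R^{-1}=-L_R$, check the subspace inclusions (with the planar curl/divergence interchange), note $L_R$ commutes with each $\Lambda_i$, and then invoke Corollaries \ref{cor:AbstractKellerDykhneMendelsonDuality} and \ref{cor:AbstractMultiPhaseKellerDykhneMendelsonDuality}, translating back to matrices via the bijection $\sigma\mapsto L_\sigma$. The only cosmetic difference is that you carry out the curl/divergence computation explicitly where the paper cites it as well known.
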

\begin{proof}
    First, the matrix $R$ is invertible with $R^*=R^{-1}=-R$. It follows that the left-multiplication operator $L_{R}\in \mathcal{L}(\mathcal{H})$ is invertible with $L_{R}^*=L_{R}^{-1}=-L_{R}$. Now, clearly $L_R\mathcal{U}=R\mathcal{U}\subseteq \mathcal{U}$ and  $L_R\Lambda_i=L_RL_{\chi_i}=L_{\chi_i}L_R=\Lambda_iL_R$, implying $R\mathcal{P}_i=L_R\mathcal{P}_i\subseteq\mathcal{P}_i$, for $i=1,\ldots, n$. Next, it is well-known (\cite[p.\ 105]{Cherkaev:2000:VMS}, \cite[Sec.\ 3.1]{Milton:2022:TOC}) that a two-dimensional curl free field when rotated pointwise by $90$ degrees produces divergence free field and vice versa, which implies $L_R\mathcal{E}=R\mathcal{E}\subseteq \mathcal{U}\overset{\bot }{\oplus }\mathcal{J}$ and $L_R\mathcal{J}=R\mathcal{J}\subseteq \mathcal{U}\overset{\bot }{\oplus }\mathcal{E}$. Next, as any element $F\in \mathcal{E}\cup \mathcal{J}$ satisfies $\langle RF\rangle=R\langle F\rangle=0$ this implies $R\mathcal{E}\subseteq \mathcal{J}$ and $R\mathcal{J}\subseteq \mathcal{E}$. By Corollaries \ref{cor:AbstractKellerDykhneMendelsonDuality} and \ref{cor:AbstractMultiPhaseKellerDykhneMendelsonDuality}, it follows that $$R\mathcal{U}=\mathcal{U}, \;R\mathcal{E}=\mathcal{J}, R\!\mathcal{J}=\mathcal{E},\;R\mathcal{P}_i=\mathcal{P}_i,\; i=1,\ldots, n,$$ and, for each $z\in D$, \begin{gather}
        L_{\sigma^*(z)}=R[L_{\sigma^*(z^{-1})}]^{-1}R^{-1}=RL_{[\sigma^*(z^{-1})]^{-1}}R^{-1}=L_{R[\sigma^*(z^{-1})]^{-1}R^{-1}},
    \end{gather}
    implying $\sigma_*(z)=R[\sigma_*(z^{-1})]^{-1}R^{-1}$. This completes the proof.
\end{proof}

The next result is an immediate consequence of a corollary (Corollary \ref{cor:AbstWienerBounds}) to Theorem \ref{ThmClassicalThomMinPrin}, where the first part provides the classical Wiener bounds for effective conductivity tensor in which the upper and lower bounds are given by the (weighted) arithmetic and harmonic means, see \cite[Chaps.\ 13 \& 22]{Milton:2022:TOC}, \cite[Chap.\ 2, Sec.\ 7]{Milton:2016:ETC}, \cite[Chap.\ 3, Sec.\ 1]{Cherkaev:2000:VMS}. 
\begin{corollary}
    (i) If $z_i>0$ for $i=1,\ldots,n$, then the effective conductivity tensor $\sigma_*(z)$ satisfies the Wiener bounds
\begin{gather}
    \left(\sum_{i=1}^nz_i^{-1}f_i\right)^{-1}I\leq \sigma_*(z)\leq\left(\sum_{i=1}^nz_if_i\right) I,\label{WienerBdsOpVer}
\end{gather}
where $I$ denotes here the $d\times d$ identity matrix of $\mathbb{K}^{d\times d}$ and $f_i$ is the volume fraction of the $i$th phase, i.e.,
\begin{gather*}
    f_i=\langle\chi_i\rangle=\frac{|\Omega_i|}{|\Omega|}=\frac{\int_{\Omega_i}dx}{\int_{\Omega}dx},\;i=1,\ldots, n.
\end{gather*}\\
(ii) If $0<z_i\leq w_i$ for $i=1,\ldots,n$ then
\begin{gather}
    \sigma_*(z)\leq \sigma_*(w).
\end{gather}
\end{corollary}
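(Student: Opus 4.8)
The plan is to derive both parts directly from the abstract Wiener bounds and monotonicity already established in Corollary \ref{cor:AbstWienerBounds}, specialized to the present conductivity $Z$-problem. The only genuine work is to evaluate the abstract compressions $\Gamma_0\Lambda_i\Gamma_0|_{\mathcal{U}}$ appearing there in terms of the volume fractions $f_i$, and then to transport the resulting operator inequalities in $\mathcal{L}(\mathcal{U})$ to matrix inequalities in $\mathbb{K}^{d\times d}$ via the identification $L_*(z)=L_{\sigma_*(z)}$.

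First I would compute the compression of each projection $\Lambda_i=L_{\chi_i}$ to $\mathcal{U}$. Since $\Gamma_0$ is the averaging operator ($\Gamma_0 F=\langle F\rangle$) and every $U\in\mathcal{U}$ is a constant vector function, one has for any $U\in\mathcal{U}$
\begin{gather*}
\Gamma_0\Lambda_i\Gamma_0 U=\Gamma_0(\chi_i U)=\langle\chi_i U\rangle=\langle\chi_i\rangle\, U=f_i U,
\end{gather*}
so that $\Gamma_0\Lambda_i\Gamma_0|_{\mathcal{U}}=f_i I_{\mathcal{U}}$, the scalar $f_i$ times the identity on $\mathcal{U}$. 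Substituting this into the two sides of the abstract Wiener bounds gives $\sum_i z_i\Gamma_0\Lambda_i\Gamma_0|_{\mathcal{U}}=(\sum_i z_i f_i)I_{\mathcal{U}}$ and $(\sum_i z_i^{-1}\Gamma_0\Lambda_i\Gamma_0|_{\mathcal{U}})^{-1}=(\sum_i z_i^{-1}f_i)^{-1}I_{\mathcal{U}}$, so that Corollary \ref{cor:AbstWienerBounds}(i) becomes
\begin{gather*}
\left(\textstyle\sum_{i=1}^n z_i^{-1}f_i\right)^{-1}I_{\mathcal{U}}\le L_*(z)\le \left(\textstyle\sum_{i=1}^n z_i f_i\right)I_{\mathcal{U}}.
\end{gather*}

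It then remains to translate this into the matrix inequality \eqref{WienerBdsOpVer}. Here I would invoke the bijection $\mathbb{K}^{d\times d}\ni\sigma\mapsto L_\sigma\in\mathcal{L}(\mathcal{U})$ together with the observation that on $\mathcal{U}=\mathbb{K}^d$ the Hilbert-space inner product reduces to the standard inner product on $\mathbb{K}^d$ (because constants pull out of the unit-cell average). Consequently $L_\sigma$ is self-adjoint, resp.\ positive, in $\mathcal{L}(\mathcal{U})$ exactly when $\sigma$ is so as a matrix, the Loewner order on $\mathcal{L}(\mathcal{U})$ corresponds precisely to the Loewner order on $\mathbb{K}^{d\times d}$, and $I_{\mathcal{U}}=L_I$ with $I$ the $d\times d$ identity. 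Since $L_*(z)=L_{\sigma_*(z)}$, applying this correspondence to the displayed operator bounds yields (i). For (ii), the same correspondence turns the abstract monotonicity statement Corollary \ref{cor:AbstWienerBounds}(ii), namely $L_*(z)\le L_*(w)$ whenever $0<z_i\le w_i$, directly into $\sigma_*(z)\le\sigma_*(w)$.

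There is no serious obstacle, as the statement is a specialization of an abstract result already in hand. The only points requiring care are the compression computation above and the verification that positivity and the Loewner order are preserved under $\sigma\mapsto L_\sigma$ on $\mathcal{U}$; both hinge on the elementary fact that the restriction of the $\mathcal{H}$-inner product to constant functions is the Euclidean inner product on $\mathbb{K}^d$.
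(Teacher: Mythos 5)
Your proposal is correct and follows essentially the same route as the paper: both reduce to Corollary \ref{cor:AbstWienerBounds} by computing $\Gamma_0\Lambda_i\Gamma_0|_{\mathcal{U}}U=\langle\chi_iU\rangle=f_iU$ and then transferring the operator inequalities to matrix inequalities via the identification $(L_{\sigma(z)})_*=L_{\sigma_*(z)}$. The extra care you take in justifying that the Loewner order on $\mathcal{L}(\mathcal{U})$ matches that on $\mathbb{K}^{d\times d}$ is a detail the paper leaves implicit, but it is not a different argument.
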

\begin{proof}
The proof of (i) follows immediately from Corollary \ref{cor:AbstWienerBounds}.(i) by using the fact that $(L_{\sigma(z)})_*=L_{\sigma_*(z)}$ and $\Gamma_0$ is just the averaging operator $\Gamma_0(\cdot)=\langle \cdot \rangle$
so that
\begin{gather*}
    \Gamma_0\Lambda_i\Gamma_0|_{\mathcal{U}}U=\Gamma_0(\chi_iU)=\langle \chi_iU\rangle=f_i\langle U\rangle=f_iU,\; U\in \mathcal{U}, \qquad i=1,\ldots, n.
\end{gather*}
The proof of (ii) follows from Corollary \ref{cor:AbstWienerBounds}.(ii), since the inequality $L_{\sigma_*(z)}\leq L_{\sigma_*(w)}$ between the left-multiplication operators implies $\sigma_*(z)\leq \sigma_*(w)$.
\end{proof}

\section{Conclusion}
This chapter has presented some fundamental concepts and results on the abstract mathematical theory of composites using the Hilbert space framework. The focus has been on the effective operator and its properties from the viewpoint of operator theory on Hilbert spaces. The approach can be used for a broader range of applied problems involving composites and the abstract framework should be further studied as many open problems still remain to be solved. 

One question that naturally arises from this consideration and deserves more attention is the effective operator \textit{realizability problem}: Find necessary and sufficient conditions for a given function $f(z)$ of $n$ real or complex variables $z=(z_1,\ldots, z_n)$ to be unitarily equivalent to an effective operator of an abstract $n$-phase composite having an orthogonal $Z(n)$-subspace collection.

There are two possible approaches to answering the question: (1) via operator theory on Hilbert spaces using the machinery of Schur complements; (2) using the analytic properties $f(z)$ as a function of $z$. Although the latter approach is worthy of consideration, it will not be discussed further as it lies outside the scope of this chapter and the interested reader is recommended to consult the references \cite{Milton:1987:MCI}, \cite{Milton:1987:MCII}, \cite[Chaps.\ 18 \& 29]{Milton:2022:TOC}, \cite{Milton:2016:SAS}, \cite{Cassier:2016:RAF}, and \cite{Milton:2021:SOP} for more details. 

For the first approach, Proposition \ref{prop:EffOpBessRealization} and Theorem \ref{thm:BessNHPSDPCharViaEffOp} provide the complete solution to the realizability problem. In particular, they describe the fundamental relationship between the following two classes of operator-valued functions: (i) effective operators of abstract multiphase composite having orthogonal subspace collections; (ii) Schur complements of normalized homogeneous positive semidefinite operator pencils. The class (i) of functions has been called the \textit{Milton class} (cf.\ \cite[Chap.\ 7]{Stefan:2021:SCA}). The class (ii) is a special subclass of the \textit{Bessmertny\u{\i} class} of all operator-valued functions which can be represented as a Schur complement of a block of homogeneous positive semidefinite operator pencils. For more on the Bessmertny\u{\i} class, see, e.g., \cite{Bessmertnyi:2002:RRM}, \cite{Alpay:2003:RTR}, \cite{Kalyuzhnyui:2004:BCH}, \cite{Ball:2011:MCS}, \cite{Stefan:EBR:2021}.


\textbf{Acknowledgements:} The author is grateful to the Simons Foundation for the support through grant MPS-TSM-00002799 and to the National Science Foundation for support through grant DMS-2410678.

\bibliographystyle{abbrv}
\bibliography{references}
\end{document}